\newtheorem{proposition}{Proposition}
\newtheorem*{rep@theorem}{\rep@title}
\newcommand{\newreptheorem}[2]{%
\newenvironment{rep#1}[1]{%
 \def\rep@title{#2 \ref{##1}}%
 \begin{rep@theorem}}%
 {\end{rep@theorem}}}
\theoremstyle{definition}
\newtheorem{definition}{Definition} 
\newtheorem{remark}{Remark} 
\newtheoremstyle{tasksty}%
  {3pt}%
  {3pt}%
  {\em}%
  {3pt}%
  {\color{MidnightBlue}\sffamily\mdseries\upshape}%
  {:}%
  { }%
  {}%
\theoremstyle{tasksty}
\newtheoremstyle{sGoal}%
  {3pt}%
  {3pt}%
  {\em}%
  {3pt}%
  {\color{BrickRed}\sffamily\mdseries\upshape}%
  {:}%
  { }%
  {}%
\theoremstyle{sGoal}
\theoremstyle{plain}
\newtheorem{theorem}{Theorem}
\newcommand{\ee}{\end{equation}}
\newcommand{\Reals}{\mathbb{R}}
\newcommand{\calD}{\mathcal{D}}
\newcommand{\calF}{\mathcal{F}}
\newcommand{\calH}{\mathcal{H}}
\newcommand{\calY}{\mathcal{Y}}
\newcommand{\bx}{\boldsymbol{x}}
\newcommand{\by}{\boldsymbol{y}}
\DeclareMathOperator*{\argmin}{\arg\!\min}
\DeclareMathOperator*{\argmax}{\arg\!\max}
\newcommand{\normEuc}[1]{\| #1 \|_2}
\newcommand{\ones}{\boldsymbol{1}}
\newcommand{\ExpVal}[2]{\mathbb{E}\left[ #2 \right]}
\newcommand{\EE}[1]{\ExpVal{}{#1}}
\definecolor{light-gray}{gray}{.90}
\definecolor{aliceblue}{rgb}{0.94, 0.97, 1.0}
\definecolor{airforceblue}{rgb}{0.36, 0.54, 0.66}
\definecolor{bleudefrance}{rgb}{0.19, 0.55, 0.91}
\definecolor{cerulean}{rgb}{0.0, 0.48, 0.65}
\newmdenv[%
  linewidth =0pt,%
  fontcolor=bleudefrance,
  innertopmargin=2pt,
  innertopmargin=2pt,
  leftmargin = 0pt,
  rightmargin = 0pt,
  innerleftmargin = 2pt,
  innerrightmargin = 2pt,
  skipabove = 6pt,%
  skipbelow = 6pt
]{RQ}
\newmdenv[%
  backgroundcolor=aliceblue, %
  linewidth = 2pt,%
  skipabove = 10pt,%
  skipbelow = 10pt,
  pstrickssetting={linestyle=dashed,},
  linecolor=airforceblue,
  middlelinewidth=2pt
]{TODO}
\newcommand{\YelpModel}{\texttt{YelpLLM }}
\newcommand{\hq}{\textsf{HQ}}
\newcommand{\MC}{\textsf{MC}}
\newcommand{\amortized}{\textsf{Amor}}
\newcommand{\selective}{\textsf{SE}}
\newcommand{\ig}{\textsf{IG}}
\newcommand{\LambdaName}{combination function }
\newcommand{\uncert}{s_{h}}
\begin{document}

\title{\textbf{Selective Explanations}}

\author{Lucas Monteiro Paes${}^{1}$~\footnote{Correspondence to Lucas Monteiro Paes (lucaspaes@g.harvard.edu).}~, Dennis Wei${}^{2}$, Flavio P. Calmon${}^{1}$}

\date{${}^{1}$Harvard University \\ ${}^{2}$IBM Research}

\maketitle
\begin{abstract}
Feature attribution methods explain black-box machine learning (ML) models by assigning importance scores to input features. These methods can be computationally expensive for large ML models. To address this challenge, there has been increasing efforts to develop \emph{amortized explainers}, where a machine learning model is trained to predict feature attribution scores with only one inference. Despite their efficiency,  amortized explainers can produce inaccurate predictions and misleading explanations. In this paper, we propose \emph{selective explanations}, a novel feature attribution method that (i) detects when amortized explainers generate low-quality explanations and (ii) improves these explanations using a technique called \emph{explanations with initial guess}.  Our selective explanation method allows practitioners to specify the fraction of samples that receive explanations with initial guess, offering a principled way to bridge the gap between amortized explainers  and their high-quality counterparts.
\end{abstract}

\section{Introduction}
Large black-box models are increasingly used to support decisions in applications ranging from online content moderation \citep{GPT4Content},  hiring  \citep{ghosh2023jobrecogpt}, and medical diagnostics \citep{wang2023chatcad}. In such high-stakes settings, the need for explaining ``why'' a model produces a given output has led to a growing number of perturbation-based \emph{feature attribution} methods \citep{Lundberg_shap, tulio_lime, paes2024multilevel, miglani2023captum, Chen2019LSTreeMI, Zeiler_CNNs}. Broadly speaking, these methods use input perturbations to assign numerical values to each input feature a model uses, indicating their influence on predictions. %
They are widely adopted in part because they work in the black-box setting with access only to model outputs (i.e., no gradients). However, existing feature attribution methods can be prohibitively expensive for the large models used in the current machine learning landscape (e.g., language models with billions of parameters) since they require a significant number of inferences for each individual explanation.

Recent literature has introduced two main strategies for speeding up feature attribution for large models: (i) employing Monte Carlo methods to approximate explanations with fewer computations \citep{Lundberg_shap, tulio_lime, covert_kernel, Mitchell_svs}, and (ii) adopting an \emph{amortized} approach, training a separate model to ``mimic'' the outputs of a reference explanation method  \citep{jethani2022fastshap, covert2024stochastic, Yang2023EfficientSV, cxplain, Chuang2023CoRTXCF, schwarzenberg-etal-2021-efficient}.
Monte Carlo approximations can yield high-quality explanations but may converge slowly, limiting their practicality for large datasets.
Amortized explainers, in turn, require only one inference per explanation, making them efficient for large black-box models and datasets. However, as demonstrated in Figure \ref{fig:comparing_explanations}, amortized explainers can occasionally produce  diverging explanations from the reference explainer used to train them.%

We propose \emph{selective explanation}, a method that bridges Monte Carlo and amortized explanations. By training a model that ``learns to select'' which method should be applied to each input, our selective explanation method can produce higher-quality explanations than amortized explainers at a significantly lower average computational cost than Monte Carlo-based approaches. The key idea behind the selective explanation method is to apply Monte Carlo explanations only to points that would receive low-quality explanations from the amortized explainer; see Figure \ref{fig:diagram} for the workflow of selective explanations. 

The ideas of predicting selectively and providing recourse with a more accurate but expensive method have been explored in classification and regression \citep{Rabanser2022SelectiveCV, selective_classification, el2010foundations, gangrade2021selective, geifman2019selectivenet}. To our knowledge, however, these ideas have not been applied to explanations. We make \textbf{two contributions} in this regard that are relevant for selective prediction more generally. (1) Selective prediction uses an \emph{uncertainty metric} to identify input points for which the predictor (the amortized explainer in our case) would produce low-quality outputs and recourse is needed. The high-dimensional nature of explanations requires us to develop new uncertainty metrics (Section~\ref{sec:selecting}) suitable for this setting. (2) Instead of providing recourse with a Monte Carlo explanation alone, as would be standard, we use an optimized method called \emph{explanations with initial guess} (Section~\ref{sec:recourse}) that combines amortized and Monte Carlo explanations, improving explanation quality beyond that of either explanation alone.

Our \textbf{overall contribution} (3) is to combine (1) and (2) in the form of \emph{selective explanations}, providing explanations with initial guess to improve low-quality amortized explanations.
We validate our selective explanations approach on two language models as well as tabular datasets, demonstrating its ability to accurately detect low-quality explanations, enhance amortized explanations with even low-quality Monte Carlo explanations, and improve the worst explanations from the amortized model.

\begin{figure}[t]
  \centering
    \begin{subfigure}[b]{0.3\linewidth}
    \includegraphics[width=\linewidth]{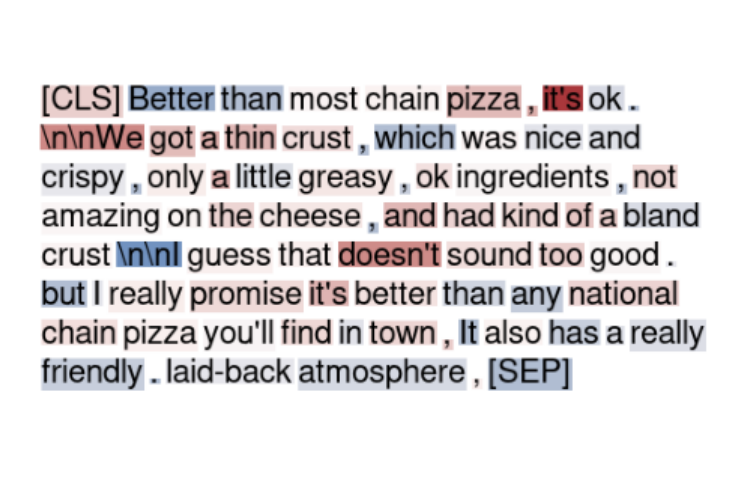}
    \caption{Amortized (MSE = 0.31)}
  \end{subfigure}
\begin{subfigure}[b]{0.3\linewidth}
    \includegraphics[width=\linewidth]{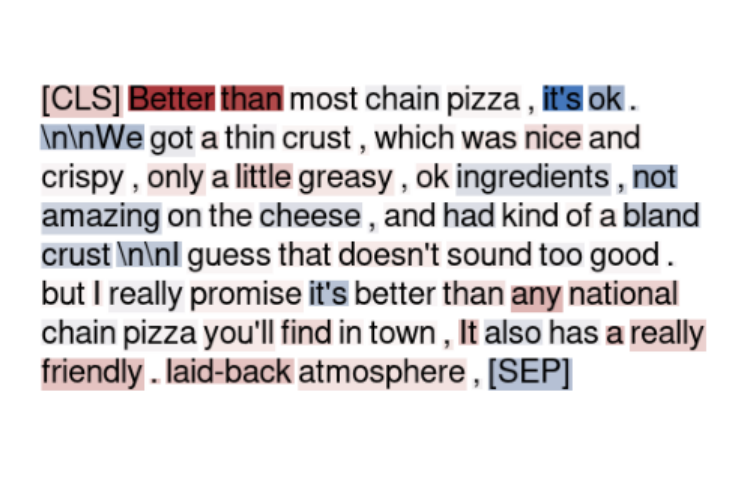}
    \caption{High-Quality (target)}
  \end{subfigure}
  \begin{subfigure}[b]{0.3\linewidth}
    \includegraphics[width=\linewidth]{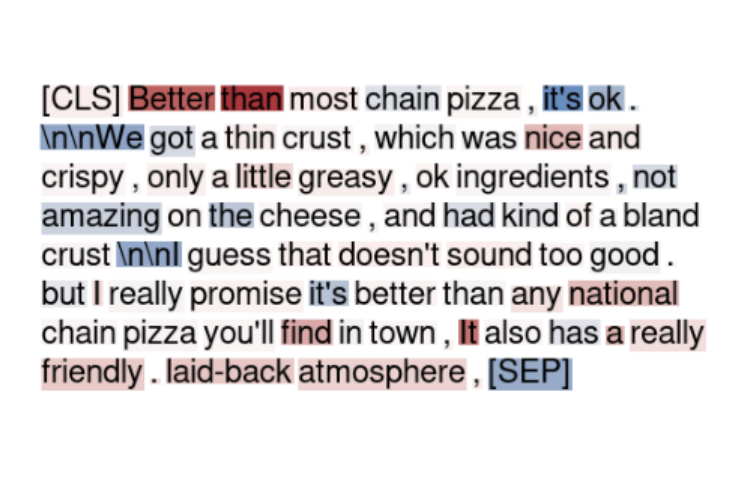}
    \caption{Selective (MSE = 0.07)}
  \end{subfigure}
  \caption{Amortized explainer (a) compared with a high-quality explainer (b) and our selective explanation method (c). All methods flag inputs that attribute why \YelpModel predicted \texttt{Negative Review} in the example. We observe that both high-quality and selective explanations attribute "not amazing" for the negative review (blue), while the amortized explainer misses this term. Similarly, the amortized explainer incorrectly the expression ``Better than."}
  \label{fig:comparing_explanations}
\end{figure}

\begin{figure}[t]
    \centering
    \includegraphics[width=0.9\linewidth]{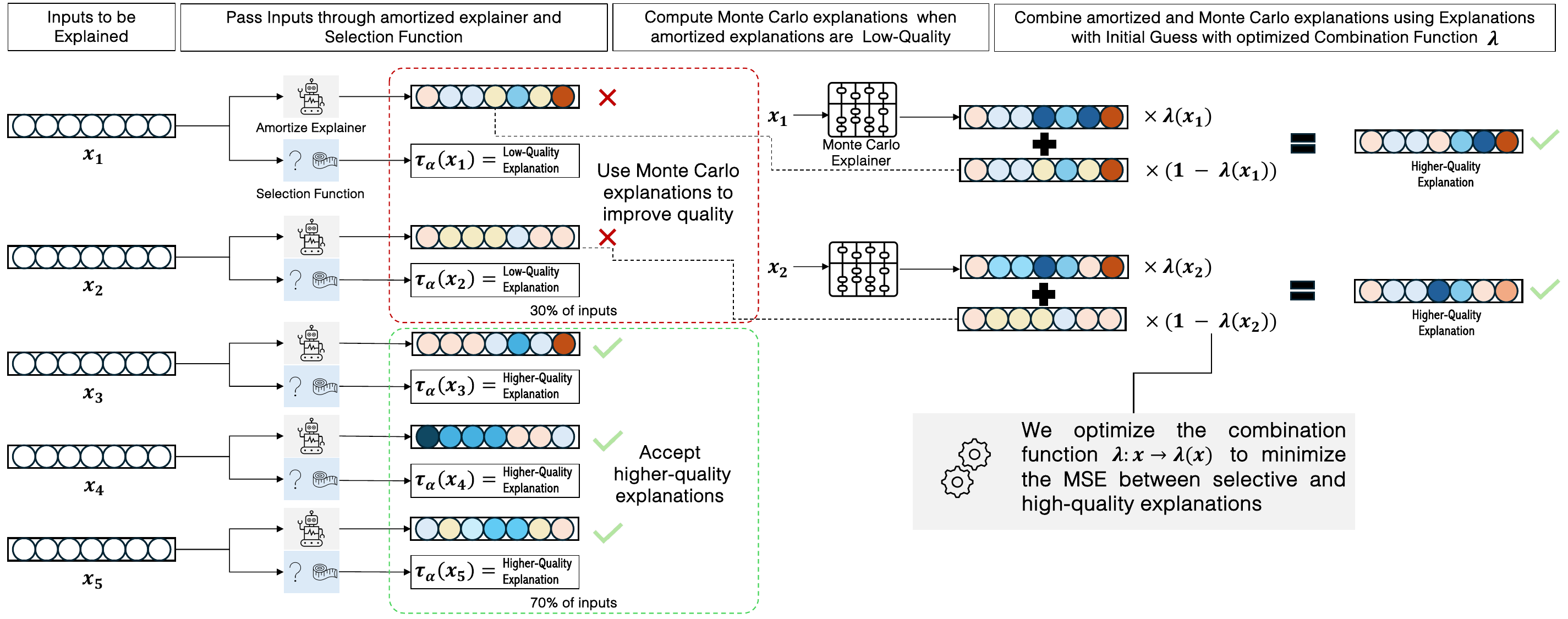}
    \caption{Workflow of selective explanations.}
    \label{fig:diagram}
\end{figure}

\section{Problem Setup \& Background}
\label{sec:background}

We aim to explain the predictions of a fixed probabilistic black-box model $h$ that predicts $h(\bx) = (h_1(\bx), ..., h_{|\calY|}(\bx))$ and outputs $\argmax_{j \in \calY} h_j(\bx) \in \calY$ using a vector of features $\bx = (x_1, ..., x_d) \in \mathbb{R}^{d}$.
The user specifies an output of interest $\by \in \calY$ (usually $\by = \argmax_{j \in \calY} h_j(\bx)$) and our goal is to explain \emph{Why would $h$ output $\by$ for a given $\bx$?} 
We consider a dataset $\calD = \{(\bx_i, \by_i)\}_{i=1}^{N}$ comprised of $N > 0$ samples divided into three parts: $\calD_{\texttt{train}}$ for training $h$ and the explainers, $\calD_{\text{cal}}$ for calibration and validation, and $\calD_{\texttt{test}}$ for testing. Thus, $\calD = \calD_{\texttt{train}} \cup \calD_{\text{cal}} \cup \calD_{\texttt{test}}$.
Moreover, for a subset $S = \{i_1, ..., i_{|S|}\} \subset [d]$ we write $\bx_{S} \triangleq (x_{i_1}, ..., x_{i_{|S|}})$.

\textbf{Feature Attribution Methods,} also called \emph{explainers}, are functions $\Reals^d \times \mathcal{Y}\to \Reals^d$ that assess the importance of each feature for the model's ($h$) prediction to be $\by$ for a given input vector $\bx$. 
We consider three types of explainers: 
\begin{enumerate}[label=(\roman*)]
    \item \textbf{High-quality explainers} that use a large number of computations to provide explanations (e.g., SHAP with $2^d$ inferences from model $h$) \citep{Lundberg_shap, tulio_lime}, denoted by $\hq(\bx, \by)$;
    \item  \textbf{Monte Carlo explainers} that approximate  high-quality explainers  using $n$ inferences from model $h$ per explanation \citep{Lundberg_shap, Mitchell_svs}, denoted by $\MC^{n}(\bx, \by)$; 
    \item \textbf{Amortized explainer} trained to approximate the high-quality explanations using only one inference \citep{covert2024stochastic, Yang2023EfficientSV}, denoted by $\amortized(\bx, \by)$.
\end{enumerate}

 We measure the difference between two competing explanations using a loss (or distortion) function $\ell: \mathbb{R}^d \times \mathbb{R}^d \rightarrow \mathbb{R}$, e.g., mean square error (MSE). The goal of selective explanations ($\selective$) is to approximate high-quality explanations while minimizing the number of computations, i.e., to minimize $\left|\left| \selective(\bx, \by) - \hq(\bx, \by) \right|\right|_2^2$. However, computing high-quality explanations for large models $h$ can be prohibitively expensive. To address this issue, we define \emph{selective explainers} below.

\begin{definition}[\textbf{Selective Explainer}] %
\label{def:selective_explanations}
For a given model $h$, an amortized explainer $\amortized$, a Monte Carlo explainer $\MC^n$, a \emph{\LambdaName}  $\lambda_h: \Reals^d \to \Reals$, and a \emph{selection function} $\tau_{\alpha}:\Reals^d\to \{0,1\}$ (parametrized by $\alpha$), we define the \emph{selective explainer} $\selective(\bx, \by)$ as
\begin{align}
    \selective(\bx, \by) \triangleq
    \begin{cases}
        \amortized(\bx, \by)&, \text{ if } \tau_{\alpha}(\bx) = 1,\\
        \lambda_h(\bx) \amortized(\bx, \by) + (1 - \lambda_h(\bx))\MC^n(\bx, \by)&, \text{ if } \tau_{\alpha}(\bx) = 0.\\
    \end{cases}
    \label{eq:selective_explanation}
\end{align}
\end{definition}

When $\tau_{\alpha} = 0$, selective explanations output \emph{explanations with initial guess} (Definition \ref{def:explanation_initial_guess}). Explanations with initial guess optimally linearly combine amortized and Monte Carlo explanations to leverage information from both and provide higher-quality explanations than either explainer alone. Selective explanations heavily depend on three objects that we define in this work: (i) an uncertainty metric (Section \ref{sec:selecting}), (ii) a selection function (Section \ref{sec:selecting}), and (iii) a combination function (Section \ref{sec:recourse}).

\begin{itemize}
    \item \textbf{Uncertainty metrics} ($\uncert$) output the likelihood of the amortized explainer producing a low-quality explanation for an input. Lower $\uncert(\bx)$ indicates a higher-quality explanation for $\bx$. We propose two uncertainty metrics: Deep and Learned Uncertainty (Section \ref{sec:selecting}).
    
    \item \textbf{Selection function} ($\tau_{\alpha}$) is a binary rule that outputs 1 for high-quality amortized explanations and 0 for low-quality ones based on the uncertainty metric. We define $\tau_{\alpha}$ to ensure a fraction $\alpha$ of inputs receive amortized explanations. Smaller $\alpha$ implies higher-quality selective explanations but also more computations (Section \ref{sec:selecting}).
    
    \item \textbf{Combination function} ($\lambda_h$) optimally linearly combines amortized and Monte Carlo explanations to minimize MSE from high-quality explanations (Theorem \ref{thm:optimal_lambda}). We propose explanations with initial guess and fit $\lambda_h$ to optimize their quality (Section \ref{sec:recourse}).
\end{itemize}

\begin{algorithm}[b]
\caption{Building a Selective Explainer}
\label{alg:selective_explainer}
\begin{algorithmic}[1]
\REQUIRE{
Datasets: $\mathcal{D}_{\texttt{train}}$, $\mathcal{D}_{\texttt{cal}}$.
Explainers: $\amortized$,
$\MC^n$, $\MC^{n'}$.
Coverage: $\alpha$.
}

\ENSURE
Selection function: $\tau_{\alpha}$.
\LambdaName: $\lambda_h$.

\STATE Fit the uncertainty metric $\uncert$ using $\calD_{\texttt{train}}$, $\amortized$, and $\MC^n$ (using \eqref{eq:DeepUncertainty} or \eqref{eq:UncLearn})

\STATE Compute $t_{\alpha}$ using  $\calD_{\texttt{cal}}$ \eqref{eq:selecting_threshold}

\STATE Define the selection function $\tau_{\alpha}$ using $\uncert$ and $t_{\alpha}$ \eqref{eq:decision_rule}

\STATE Define bins $Q_i = [t_{\alpha_i}, t_{\alpha_{i+1}})$ for partition $\alpha_i = \frac{i-1}{k}$ for $i \in [k+1]$ \eqref{eq:quantile_definition}

\STATE For $i \in [k+1]$ Compute $\lambda_i$ as in \eqref{eq:lambda_aprox_definition} using $\mathcal{D}_{\texttt{cal}}$, $\amortized$,
$\MC^n$, and $\MC^{n'}$.

\STATE Define $\lambda_h(\bx) = \sum_{i = 1}^{k+1} \lambda_i\ones[\uncert(\bx) \in Q_i] $ as in \eqref{eq:quantile_definition}

\RETURN  $\tau_{\alpha}$, $\lambda_h(\bx)$
\end{algorithmic}
\end{algorithm}

%
%
%
%

%
%

%

%
Algorithm \ref{alg:selective_explainer} describes the procedure to compute the uncertainty metric, selection function, and \LambdaName using the results we describe in Section \ref{sec:selecting} and \ref{sec:recourse}.
Although selective explanations can be applied to any feature attribution method, we focus on Shapley values since they are widely used and most amortized explainers are tailored for them \citep{jethani2022fastshap, Yang2023EfficientSV, covert2024stochastic}.
We discuss how selective explanations can be applied to LIME and provide more details on feature attribution methods in Appendix~\ref{apx:additional)explanations}.
Next, we describe specific feature attribution methods that we use as building blocks for selective explainers of the form \eqref{eq:selective_explanation}.

\textbf{Shapley Values (SHAP)}  \citep{Lundberg_shap}
is a \textbf{high-quality} explainer
that attributes a value $\phi_i$ for each feature $x_i$ in $\bx = (x_1, ..., x_d)$ which is the marginal contribution of feature $x_i$ if the model was to predict $\by$
\begin{equation}
    \phi_i(\bx, \by) = \frac{1}{d} \sum_{S \subset [d]/\{i\}}  {d - 1 \choose |S|}^{-1} \left( h_{\by}(\bx_{S \cup \{i\}}) - h_{\by}(\bx_{S}) \right).
    \label{eq:shapley_values}
\end{equation}
SHAP has several desirable properties and is widely used. However, as \eqref{eq:shapley_values} indicates, computing Shapley values and the attribution vector $\hq(\bx, \by) = (\phi_1(\bx, \by), ..., \phi_d(\bx, \by))$ requires $2^d$ inferences from $h$, making SHAP impractical for large models where inference is costly.
This has motivated several approximation methods for SHAP, discussed next.

\textbf{Shapley Value Sampling (SVS)} \citep{Mitchell_svs} is a \textbf{Monte Carlo} explainer that approximates SHAP by restricting the sum in \eqref{eq:shapley_values} to $m$ uniformly sampled permutations of features performing $n = md +1$ inferences.
We denote SVS that samples $m$ feature permutations by SVS-$m$.

\textbf{Kernel Shap (KS)} \citep{Lundberg_shap} is a \textbf{Monte Carlo} explainer that approximate Shapley values using the fact that SHAP can be computed by solving a weighted linear regression problem using $n$ input perturbations resulting in $n$ inferences.
We refer to Kernel Shap using $n$ inferences as KS-$n$. 

\textbf{Stochastic Amortization} \citep{covert2024stochastic} is a \textbf{Amortized} explainer that uses noisy Monte Carlo explanations to learn high-quality explanations. \citet{covert2024stochastic} trained an amortized explainer in a model class $\calF$ (multilayer perceptrons) $\amortized \in \calF$  to take $(\bx, \by)$ and predicts an explanation $\amortized(\bx, \by) \approx \hq(\bx, \by)$ by minimize the $L_2$ norm from Monte Carlo explanations $\MC^{n}(\bx, \by)$.
Specifically, the amortized explainer is given by
\begin{equation}
    \amortized \in \argmin_{f \in \calF} \sum_{(\bx, \by) \in \calD_{\text{train}}} \normEuc{ f(\bx, \by) - \MC^{n}(\bx, \by)}^{2}.
    \label{eq:training_amortized}
\end{equation}

\textbf{Amortized Shap for LLMs} \citep{Yang2023EfficientSV}
is a \textbf{Amortized} explainer similar to stochastic amortization but tailored for LLMs. 
\citet{Yang2023EfficientSV} train a linear regression on the LLM embeddings $[e_1(\bx), ..., e_{|\bx|}(\bx)]$ to minimize the $L_2$ norm from Monte Carlo explanations $\MC^{n}(\bx, \by)$ and define the amortized explainer as
$
    \amortized(\bx, \by) = (W_{\by}e_1(\bx) + b_{\by}, ..., W_{\by}e_{|\bx|}(\bx) + b_{\by}),
$
$W_{\by}$ is a matrix and $b_{\by} \in \Reals$.

We use stochastic amortization to produce amortized explainers for tabular datasets and Amortized Shap for LLMs to produce explainers for LLM predictions. Both explainers are trained using SVS-12 as $\MC^{n}$. High-quality and Monte Carlo explanations are computed using the Captum library \citep{captum_lib}.
\section{Selecting Explanations}
\label{sec:selecting}
In this section, we define key concepts for selective explainers: (i) uncertainty metrics $\uncert$ to quantify the likelihood of an explanation being low-quality and (ii) selection functions ($\tau_{\alpha}$) to predict when amortized explanations are high-quality based on the value of an uncertainty metric.

\paragraph{Uncertainty Metrics for High-Dimensional Regression:} An uncertainty metric is a function tailored for the model $h$ that takes $\bx$ and outputs a real number $\uncert(\bx)$ that encodes information about the uncertainty of the model $h$ in the prediction for $\bx$. Generally, if $\uncert(\bx) < \uncert(\bx')$ then the model is more confident about the prediction $h(\bx)$ than $h(\bx')$ \citep{selective_classification, Rabanser2022SelectiveCV}.
Existing uncertainty metrics cater to (i) classification \citep{Rabanser2022SelectiveCV, selective_classification, el2010foundations, gangrade2021selective, geifman2019selectivenet} and (ii) one-dimensional regression \citep{zaoui2020regression, shah2022selective, geifman2019selectivenet, jiang2020risk}, but none specifically address high-dimensional regression -- which is our case of interest ($d$-dimensional explanations).
Next, we propose two uncertainty metrics tailored to high-dimensional outputs: (i) Deep uncertainty and (ii) Learned uncertainty.

\textbf{Deep Uncertainty} is inspired by deep ensembles \citep{lakshminarayanan2017simple}, a method that uses an ensemble of models to provide confidence intervals for the predictions of one model.
We run the training pipeline for the amortized explainer described in \eqref{eq:training_amortized} $k$ times, each with a different random seed, resulting in $k$ different amortized explainers $\amortized^1, ..., \amortized^k$.
We define the deep uncertainty as
\begin{equation}
    \uncert^{\texttt{Deep}}(\bx) \triangleq \frac{1}{dk} \sum_{i = 1}^{d} \text{Var}\left(\amortized^1(\bx)_{i}, ..., \amortized^k(\bx)_{i}\right).
    \label{eq:DeepUncertainty}
\end{equation}
Here, $\text{Var}\left(a_1, ..., a_k \right)$ is the variance of the sample $\{a_1, ..., a_k\}$ and $\amortized^j(\bx)_{i}$ indicates the $i$-th entry of the feature attribution vector $\amortized^j(\bx)$.
Hence, deep uncertainty is the average (across entries) of the variance (across all trained amortized explainers) for the predicted attributions.

If the deep uncertainty for a point $\bx$ is zero, then the amortized explainers produce the same feature attribution.
On the other hand, if the deep uncertainty is high, then the feature attributions vary widely across the amortized explainers.
Intuitively, the points with a higher deep uncertainty are more affected by a random seed change, implying more uncertainty in the explanation.%

\textbf{Learned Uncertainty} uses data to predict the amortized explainer uncertainty at an input point $\bx$.
We choose $\ell$ (the loss function) between two explanations to be MSE. The learned uncertainty metric is a function in the class $\calF$ (multilayer perceptron in our experiments) such that
\begin{equation}
    \uncert^{\texttt{Learn}} \in \argmin_{s \in \calF} \sum_{(\bx, \by) \in \calD_{\text{train}}} \left| s(\bx) - \ell\left(\amortized(\bx; \by), \MC^n(\bx; \by)\right)\right|^2.
    \label{eq:UncLearn}
\end{equation}
Ideally, instead of using the Monte Carlo explanation $\MC^n$ as the reference in \eqref{eq:UncLearn}, we would like to use high-quality explanations, i.e., $\ell\left(\amortized(\bx; \by), \hq(\bx; \by)\right)$. However, these computationally expensive explanations are usually not available.
Thus, we resort to using Monte Carlo explanations.

For large language models, the textual input $\bx$ is encoded in a sequence of token embedding $[e_1(\bx), ..., e_{|\bx|}(\bx)]$ such that $e_{i}(\bx) \in \mathbb{R}^{d}$ for $i \in [|\bx|]$.
In this case, we use the mean (i.e., ``mean-pooling'') of the token embeddings to train the learned uncertainty metric instead of $\bx$.

We analyze the performance of the proposed uncertainty metrics in Section \ref{ssec:coverage_vs_performance}, showing that it can be used to detect low-quality explanations from the amortized explainer. Our results indicate that these functions closely approximate the best possible uncertainty measure -- the Oracle with knowledge of high-quality explanations (Figure \ref{fig:coverage_MSE}).
Next, we define the selection function that allows practitioners to set a coverage (percentage of points) $\alpha$ that will receive amortized explanations.

\paragraph{Selection functions:} a selection function is the binary qualifier ($\tau_{\alpha}$) that thresholds the uncertainty metric by $t_{\alpha} \in \mathbb{R}$ given by
\begin{align}
    \tau_{\alpha}(\bx) \triangleq \begin{cases}
        1 & \text{ if } \uncert(\bx) \leq t_{\alpha} \texttt{ (high-quality explanations)}\\
        0 & \text{ if } \uncert(\bx) > t_{\alpha} \texttt{ (low-quality explanations)}
    \end{cases}.
    \label{eq:decision_rule}
\end{align}
Intuitively, $t_{\alpha}$ is the maximum uncertainty level tolerated by the user.
In practice, if the output of the selection function is $1$ (high-quality explanation), we use the explanations from the amortized model; if the output of the selection function is $0$ (low-quality explanation), we use explanations with initial guess (see Definition \ref{def:explanation_initial_guess} bellow) to improve the explanation provided to the user.
The threshold $t_{\alpha}$ is chosen to be the $\alpha$-quantile of the uncertainty metric to ensure that at least a fraction $\alpha$ of points receive a computationally cheap explanation -- we call $\alpha$ the \emph{coverage}.
Specifically, given $\alpha$, we calibrate $t_{\alpha}$ in the calibration dataset $\calD_{\texttt{cal}}$ and compute it as
\begin{equation}
    t_{\alpha} \triangleq \min_{t \in \Reals} t, \text{  such that } \Pr_{\texttt{cal}}[ \uncert(\bx) \leq t] \geq \alpha,
    \label{eq:selecting_threshold}
\end{equation}
where $\Pr_{\texttt{cal}}$ is the empirical distribution of the calibration dataset. For discussions on selecting coverage with guarantees on the number of inferences for selective explanations, see Appendix \ref{sec:coverage_for_budget}.

\begin{remark}
    \label{rem:theory_guarantee}
    A property of selective predictions \citep{selective_classification}, that is transferred to selective explanations is that it is possible to control the explainer's performance via the threshold $t_{\alpha}$ with guaranteed performance but without providing predictions for all points. This result is displayed in Figure \ref{fig:coverage_MSE}. 
\end{remark}

\section{Explanations with Initial Guess}
\label{sec:recourse}
In the previous section, we introduced methods to detect points likely to receive low-quality explanations from amortized explainers. This raises the question: \emph{How can we improve the explanations for these points?} One approach is to simply use Monte Carlo ($\MC$) explanations instead of amortized explanations. However, this ignores potentially valuable information already computed by the amortized explainer. In this section, we propose a more effective solution called \emph{explanations with initial guess}, which combines amortized and Monte Carlo explanations to improve explanation quality.

\textbf{Explanation with Initial Guess}
uses an optimized linear combination of the amortized explanation with a more computationally expensive method -- the Monte Carlo explainer -- to improve the quality of the explanation.
We formally define \emph{explanations with initial guess} next.

\begin{definition}[Explanation with Initial Guess]
\label{def:explanation_initial_guess}
Given a Monte Carlo explainer $\MC^n(\bx, \by)$, and a \LambdaName  $\lambda_h : \mathbb{R}^d \rightarrow \Reals$ that reflects the quality of the amortized explanation $\amortized$, we define the explanation with initial guess as
    \begin{equation}
         \ig(\bx, \by) \triangleq \lambda_h(\bx) \amortized(\bx, \by) + (1 - \lambda_h(\bx))\MC^n(\bx, \by).
         \label{eq:initial_guess}
\end{equation}
\end{definition}
Recall that when $\tau_{\alpha}(\bx) = 0$, selective explanations use the explanation with initial guess \eqref{eq:selective_explanation} to improve low-quality amortized explanations, i.e.,
$
    \selective(\bx, \by) = \ig(\bx, \by).
$

Defining explanations with initial guess as the linear combination between the amortized and the Monte Carlo explanations is inspired by the literature on shrinkage estimators \citep{lemmer1981ordinary, lemmer1981note} that use an initial guess ($\amortized(\bx, \by)$ in our case) to improve the estimation MSE in comparison with only using the empirical average (a role played by $\MC^n(\bx, \by)$ in our case).
Next, we tune $\lambda_h$ to minimize the MSE from high-quality explanations.

\paragraph{Optimizing the Explanation Quality:} 
Our goal is for explanations with initial guess to approximate the high-quality explanations from $\hq$, i.e., $||  \ig(\bx, \by) -  \hq(\bx, \by)||$ to be minimized.
To achieve this, we optimize the function $\lambda_h$ as follows.

First, since high-quality explanations $\hq$ are unavailable, we use another Monte Carlo explanation $\MC^{n'}$ that closely approximates $\hq$. $\MC^{n'}$ is different from $\MC^{n}$ and potentially more computationally expensive. Importantly, $\MC^{n'}$ is only needed beforehand when computing $\lambda_h$, not at prediction time.
In our experiments, we use SVS-12 for $\MC^{n'}$.

Second, we quantize the range of the uncertainty metric $\uncert$ into bins to aggregate points with similar uncertainty and define the bins $Q_i$ by a partition $0=\alpha_1 < \alpha_2 < ... < \alpha_m = 1$ of $[0, 1]$: 
\begin{equation} 
Q_{i} \triangleq [t_{\alpha_i}, t_{\alpha_{i+1}}), \ \ \forall i \in [m-1] \label{eq:quantile_definition} 
\end{equation} 
where $t_{\alpha_i}$ is defined as in \eqref{eq:selecting_threshold}.
We then define the \LambdaName to be
\begin{equation}
    \lambda_h(\bx) = \lambda_i \text{ if } \uncert(\bx) \in Q_i,
\end{equation}
$\lambda_h$ is chosen to optimize the explanation-quality for points with similar uncertainty, $\lambda_i$ is given by: 
\begin{equation} 
\lambda_i \triangleq \argmin_{\lambda \in \mathbb{R}} \sum_{\substack{ (\bx, \by) \in \calD_{\texttt{cal}} \\ \uncert(\bx) \in Q_i}} \left|\left| \ig(\bx, \by) - \MC^{n'}(\bx, \by) \right|\right|_2^2. 
\label{eq:lambda_problem} 
\end{equation}

We only compute $\lambda_i$ once per bin we provide explanations with initial guess \eqref{eq:initial_guess}, i.e., when $\tau_{\alpha}(\bx) = 0$.

Theorem \ref{thm:optimal_lambda} provides a closed-form solution for $\lambda_i$.
\begin{theorem}[Optimal $\lambda_h$]
\label{thm:optimal_lambda}
Let $0=\alpha_1 < \alpha_2 < ... < \alpha_m = 1$ and define $Q_i$ as in \eqref{eq:quantile_definition}.
Then the solution to the optimization problem in \eqref{eq:lambda_problem} is given by
\begin{equation}
    \lambda_i = \frac{\sum_{\substack{ (\bx, \by) \in \calD_{\texttt{cal}} \\ \uncert(\bx) \in Q_i}} \langle \MC^{n}(\bx, \by) - \MC^{n'}(\bx, \by), \MC^{n}(\bx, \by) - \amortized(\bx, \by) \rangle}{\sum_{\substack{ (\bx, \by) \in \calD_{\texttt{cal}} \\ \uncert(\bx) \in Q_i}} \left|\left| \amortized(\bx, \by) - \MC^n(\bx, \by)\right|\right|_2^2}.
    \label{eq:lambda_aprox_definition}
\end{equation}
\end{theorem}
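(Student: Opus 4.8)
The plan is to treat \eqref{eq:lambda_problem} as an unconstrained scalar least-squares problem, since on the bin $Q_i$ the \LambdaName reduces to the single constant $\lambda_i$. First I would fix the bin index $i$ and, to lighten notation, abbreviate the three explanations at a sample $(\bx,\by)$ as $a = \amortized(\bx,\by)$, $b = \MC^n(\bx,\by)$, and $c = \MC^{n'}(\bx,\by)$. Rewriting the convex combination in \eqref{eq:initial_guess} in the affine form $\ig = b + \lambda(a-b)$ immediately exposes the residual as an affine function of $\lambda$, namely $\ig - \MC^{n'} = (b-c) + \lambda(a-b)$, which is the key reduction that makes the objective a one-dimensional quadratic.

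Next I would substitute this residual into the objective and expand the squared Euclidean norm, collecting the sum over the bin into a scalar quadratic $J(\lambda) = A\lambda^2 + 2B\lambda + C$ with coefficients $A = \sum \|a-b\|_2^2$, $B = \sum \langle b-c,\, a-b\rangle$, and $C = \sum \|b-c\|_2^2$, where all sums run over $(\bx,\by)\in\calD_{\texttt{cal}}$ with $\uncert(\bx)\in Q_i$. Since $A \ge 0$, the map $\lambda \mapsto J(\lambda)$ is convex, so its unique stationary point (when $A>0$) is the global minimizer. Setting $J'(\lambda) = 2A\lambda + 2B = 0$ yields $\lambda_i = -B/A$, and the final step is purely cosmetic: the antisymmetry $\langle b-c,\, a-b\rangle = -\langle b-c,\, b-a\rangle$ turns $-B$ into $\sum \langle \MC^n - \MC^{n'},\, \MC^n - \amortized\rangle$, recovering \eqref{eq:lambda_aprox_definition} exactly.

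I do not expect a genuine obstacle, as the statement is a one-variable least-squares projection; the only point requiring care is the well-definedness of the quotient. The formula is valid precisely when $A = \sum \|\amortized - \MC^n\|_2^2 > 0$, i.e., when the amortized and Monte Carlo explanations do not coincide on every sample falling in $Q_i$; otherwise every $\lambda$ is optimal and the quotient is the indeterminate $0/0$, so I would flag this edge case explicitly (in practice the denominator is strictly positive). I would also note that the geometric content is the aggregated orthogonality (normal) condition $\sum \langle \ig - \MC^{n'},\, \amortized - \MC^n\rangle = 0$: the residual of the optimal initial-guess explanation is orthogonal to the search direction $\amortized - \MC^n$, which is exactly the normal equation for projecting $\MC^{n'}$ onto the line through $\MC^n$ in the direction $\amortized - \MC^n$.
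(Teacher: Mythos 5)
Your proposal is correct and follows essentially the same route as the paper's proof: express the objective as a convex scalar quadratic in $\lambda$, differentiate, and set the derivative to zero to obtain the quotient in \eqref{eq:lambda_aprox_definition}. Your explicit flagging of the degenerate case $\sum \left|\left|\amortized - \MC^n\right|\right|_2^2 = 0$ is a small but worthwhile addition that the paper's proof omits.
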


The range of uncertainty functions is \textbf{quantized} for two main reasons.
First, the uncertainty metric $\uncert$ encodes the amortized explainer's uncertainty for each point $\bx$. This uncertainty quantification should be reflected in the choice of $\lambda_h$. Quantizing the range of $\uncert$ allows us to group points with similar uncertainty levels and optimize $\lambda_h$ for each group separately.
Second, quantizing the range of $\uncert$ enables us to have multiple point per bin $Q_i$ allowing us to compute $\lambda_i$ to minimize the MSE in each bin.

We use the \textbf{Monte Carlo} explainer $\MC^{n'}$ because: (i) as mentioned above, we assume we don't have access to high-quality explanations due to there computational cost and (ii) even when using this Monte Carlo explainer, we show that in all bins $\lambda_i$ approximates well the optimal \LambdaName computed assuming access to high-quality explanations from $\hq$ defined as

$$
    \lambda^{\texttt{opt}}_i = \argmin_{\lambda \in [0, 1]} \sum_{\substack{ (\bx, \by) \in \calD_{\texttt{cal}} \\ \uncert(\bx) \in Q_i}} \left|\left|  \ig(\bx, \by) - \hq(\bx, \by) \right|\right|_2^2.
$$
Specifically, Theorem \ref{thm:lambda_are_close} shows that $\lambda_i \approx \lambda^{\texttt{opt}}_i$, Appendix \ref{apx:proofs} shows the formal version of the Theorem along with the proofs for all results in this section.

\begin{theorem}[\textbf{Informal} $\lambda_i \approx \lambda^{\texttt{opt}}_i$]
\label{thm:lambda_are_close}
If (i) $\MC^{n}$ is sufficiently different from the amortized explainer $\amortized$ and (ii) $\MC^{n'}$ approximates the high-quality explanations $\hq$ then $\lambda_i$ and $\lambda^{\texttt{opt}}_i$ are close with high-probability for all bins $Q_i$, i.e.,
\begin{equation*}
    |\lambda_i -\lambda^{\texttt{opt}}_i| \leq \epsilon \text{ with probability at least } 1 - e^{-C|Q_i|}. 
\end{equation*}
for a $C>0$ and $|Q_i|$ is the number of points in the validation dataset $\calD_{\texttt{cal}}$ that are in the bin $Q_i$.
\end{theorem}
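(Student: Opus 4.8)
The plan is to collapse the comparison of the two coefficients into a single scalar ratio via the closed form of Theorem~\ref{thm:optimal_lambda}, and then control the numerator and denominator of that ratio by one Cauchy--Schwarz step followed by two Hoeffding bounds. First I would record the ``oracle'' coefficient: running the same first-order argument that proves Theorem~\ref{thm:optimal_lambda}, but with $\hq$ in place of $\MC^{n'}$ as the regression target, yields the unconstrained minimizer
\[
\tilde\lambda^{\texttt{opt}}_i = \frac{\sum \langle \MC^n - \hq,\, \MC^n - \amortized\rangle}{\sum \|\amortized - \MC^n\|_2^2},
\]
with sums over $(\bx,\by)\in\calD_{\texttt{cal}}$ such that $\uncert(\bx)\in Q_i$, and $\lambda^{\texttt{opt}}_i = \Pi_{[0,1]}(\tilde\lambda^{\texttt{opt}}_i)$ is its projection onto $[0,1]$. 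Since projection onto a convex set is non-expansive, $|\lambda_i - \lambda^{\texttt{opt}}_i| \le |\lambda_i - \tilde\lambda^{\texttt{opt}}_i|$ (clipping $\lambda_i$ as well, or assuming the minimizers are interior), so it suffices to bound the unconstrained difference. Because $\lambda_i$ and $\tilde\lambda^{\texttt{opt}}_i$ share the common denominator $D_i := \sum \|\amortized - \MC^n\|_2^2$, their difference telescopes to
\[
\lambda_i - \tilde\lambda^{\texttt{opt}}_i = \frac{1}{D_i}\sum \langle \hq - \MC^{n'},\, \MC^n - \amortized\rangle =: \frac{N_i}{D_i}.
\]

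Next I would abbreviate the per-point vectors in bin $Q_i$ as $a_j, m_j, m'_j, \phi_j$ (amortized, $\MC^n$, $\MC^{n'}$, $\hq$) and set $n_i := |Q_i|$. Applying Cauchy--Schwarz to $N_i$ gives
\[
|N_i| \le \Big(\textstyle\sum_j \|\phi_j - m'_j\|_2^2\Big)^{1/2}\Big(\textstyle\sum_j \|m_j - a_j\|_2^2\Big)^{1/2} = \sqrt{n_i\,\overline E_i}\,\sqrt{D_i},
\]
where $\overline E_i := \tfrac1{n_i}\sum_j \|\phi_j - m'_j\|_2^2$ and $\bar D_i := D_i/n_i$, so that $|\lambda_i - \tilde\lambda^{\texttt{opt}}_i| \le \sqrt{\overline E_i/\bar D_i}$. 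This is the crux: the claim follows once $\overline E_i$ is small and $\bar D_i$ is bounded away from zero. I would formalize condition (ii) as $\mathbb{E}\|\hq - \MC^{n'}\|_2^2 \le \delta^2$ (the mean-squared Monte Carlo error of $\MC^{n'}$, which tends to $0$ as $n'\to\infty$), and condition (i) as $\mathbb{E}\|\amortized - \MC^n\|_2^2 \ge c > 0$ (``sufficiently different''). Since probabilistic model outputs lie in $[0,1]$, all four explanation vectors are bounded, so each summand of $\overline E_i$ and of $\bar D_i$ is a bounded i.i.d.\ random variable (the points of $Q_i$ are i.i.d.\ draws from the conditional law given $\uncert(\bx)\in Q_i$).

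I would then close with two Hoeffding bounds, $\Pr[\overline E_i > 2\delta^2] \le e^{-C_2 n_i}$ and $\Pr[\bar D_i < c/2] \le e^{-C_1 n_i}$. On the intersection of the complementary events---which by a union bound has probability at least $1 - e^{-C n_i}$ after absorbing constants---we get $|\lambda_i - \lambda^{\texttt{opt}}_i| \le \sqrt{2\delta^2/(c/2)} = 2\delta/\sqrt{c} =: \epsilon$, which is the stated inequality with $C$ depending on $c$, $\delta$, and the explanation bound. The step I expect to be the main obstacle is controlling the \emph{ratio} of two fluctuating averages: because $\bar D_i$ sits under a square root in the denominator, the bound is vacuous until $\bar D_i$ is pinned away from $0$, which is exactly the role of hypothesis (i) and the reason its lower bound must itself be certified with high probability (not merely in expectation). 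A secondary, more cosmetic wrinkle is the domain mismatch between the unconstrained $\lambda_i$ and the $[0,1]$-constrained $\lambda^{\texttt{opt}}_i$, which the non-expansiveness of the projection resolves cleanly.
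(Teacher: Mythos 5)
Your proposal is correct and follows essentially the same route as the paper's proof: both express $\lambda^{\texttt{opt}}_i$ via the closed form of Theorem~\ref{thm:optimal_lambda} with $\hq$ in place of $\MC^{n'}$, telescope the difference over the common denominator $\sum\|\amortized-\MC^{n}\|_2^2$, apply Cauchy--Schwarz to the resulting cross term, and then combine two Hoeffding bounds (one pinning the denominator above $\mu/2$ using hypothesis (i), one controlling the $\MC^{n'}$-to-$\hq$ error using hypothesis (ii)) via a union/conditioning argument. Your explicit handling of the $[0,1]$ projection and your form of the Cauchy--Schwarz step are, if anything, slightly cleaner than the paper's, but the argument is the same.
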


\section{Experimental Results}
\label{sec:experiments}
This section analyzes the performance of selective explanations.
We show that (i) uncertainty metrics accurately identify low-quality explanations (Figure \ref{fig:coverage_MSE}), (ii) explanations with initial guess have a higher quality than amortized and Monte Carlo explanations (Figure \ref{fig:initial_guess_mse}), (iii) selective explanations improve the performance of the lowest-quality explanations (Figure \ref{fig:performance_in_bottom}), and (iv) selective explanations improve local fidelity (Figure \ref{fig:local_fidelity}).
We also (a) analyze how the quality of Monte Carlo explanations impact explanations with initial guess (Appendix \ref{apx:ablation_different_recourse}) and (b) show that selective explanations can be used to improve the inference vs. MSE trade-off of Monte Carlo explanations (Appendix \ref{apx:time_sharing}).

\paragraph{Experimental Setup:}
We generate selective explanations and evaluate their MSE and Spearman's correlation to the high-quality explanation computed using a large number of inferences\footnote{We provide details on how high-quality explanations were computed in Appendix \ref{apx:implementation_details}}.
Although our results hold for any feature attribution method, in this section, we focus on Shapley values due to their frequent use and their prevalence in the literature on amortized explainers \citep{jethani2022fastshap, covert2024stochastic, Yang2023EfficientSV}. 
Seaborn \citep{Waskom2021} is used to compute $95\%$ confidence intervals using bootstrap.

\paragraph{Datasets \& Tasks:} We show results for four datasets: two tabular datasets UCI-Adult \citep{misc_adult_2} and UCI-News \citep{misc_online_news_popularity_332}, and two text classification datasets Yelp Review \citep{yelp} and Toxigen \citep{toxigen}.
In the UCI-Adult dataset, the task is to predict if a given individual makes more than $\$50$k a year from a vector with $12$ features; in UCI-News, the task is to predict if a news article will be shared more than $1400$ (median sharing count) times from a vector with $58$ features.
In the Yelp Review dataset, the task is to predict whether a given Yelp review is positive or not, and in the Toxigen dataset, the task is to predict whether a given input text is toxic or not. 
We use $4000$ samples from each dataset due to the computational cost of computing high-quality explanations for this evaluation. \textbf{Models:}
For the tabular datasets, we train a multilayer perceptron \citep{haykin1994neural} to learn the desired task.
We use the HuggingFace Bert-based model \texttt{textattack/bert-base-uncased-yelp-polarity} \citep{morris2020textattack} for the Yelp dataset and the Roberta-based model \texttt{tomh/toxigen\_roberta} \citep{toxigen} for the Toxigen dataset
\footnote{For more details on implementation, please see Appendix \ref{apx:implementation_details}.}.

\paragraph{Efficacy of Uncertainty Measures:}
\label{ssec:coverage_vs_performance}
In Figure \ref{fig:coverage_MSE}, the x-axis shows the coverage ($\alpha$) of the amortized explainer, while the y-axis shows the average mean square error (MSE) \footnote{In Appendix \ref{apx:coverage_vs_performance}, we also show the effect of our uncertainty metrics on Spearman's correlation.} of the selected amortized explanations from high-quality explanations, using deep uncertainty (with 20 models) and learned uncertainty to select which points should fall within the coverage.
The Oracle\footnote{The oracle is computationally expensive because it requires access to high-quality explanations.} is computed by sorting examples from smallest to highest MSE and computing the average MSE for the bottom $\alpha$-fraction of points and is the best that can be done.
Figure \ref{fig:coverage_MSE} shows that both deep uncertainty and learned uncertainty metrics can successfully identify examples that will receive lower and higher-quality explanations.
For the large models ((c) and (d)), the learned uncertainty metric can identify points that will receive low-quality explanations almost as accurately as the Oracle.
Also, we can ensure an MSE smaller than 0.003 (Adult), 0.025 (News), 0.07 (Yelp), and 0.007 (Toxigen) instead of the average MSEs that are 0.014, 0.032, 0.11, and 0.032 respectively with theoretical guarantees \citep{selective_classification} for 50\% of the points as described in Remark \ref{rem:theory_guarantee}. 

\begin{figure}[t]
  \centering
    \begin{subfigure}[b]{0.24\linewidth}
    \includegraphics[width=\linewidth]{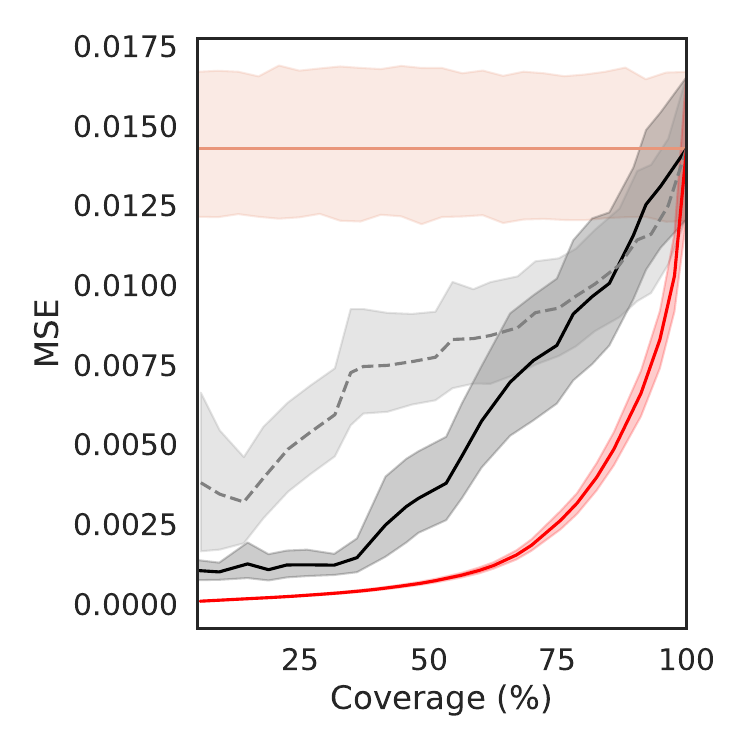}
    \caption{UCI-Adult}
  \end{subfigure}
\begin{subfigure}[b]{0.24\linewidth}
    \includegraphics[width=\linewidth]{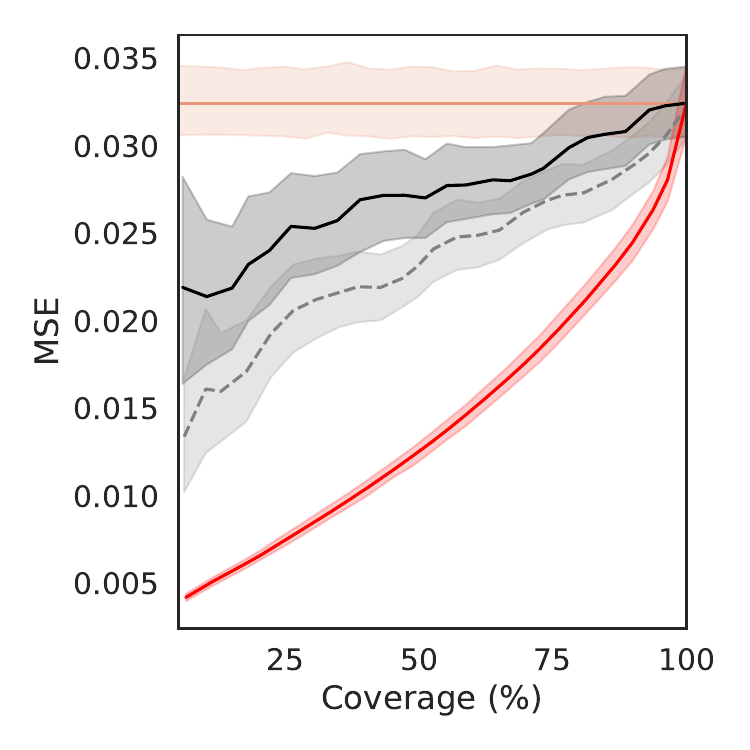}
    \caption{UCI-News}
  \end{subfigure}
  \begin{subfigure}[b]{0.24\linewidth}
    \includegraphics[width=\linewidth]{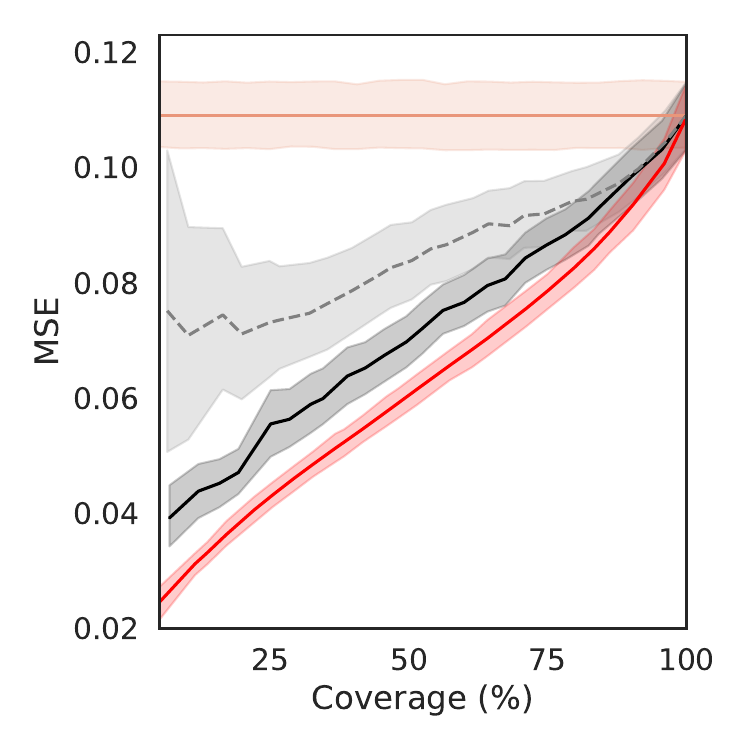}
    \caption{Yelp Review}
  \end{subfigure}
  \begin{subfigure}[b]{0.24\linewidth}
    \includegraphics[width=\linewidth]{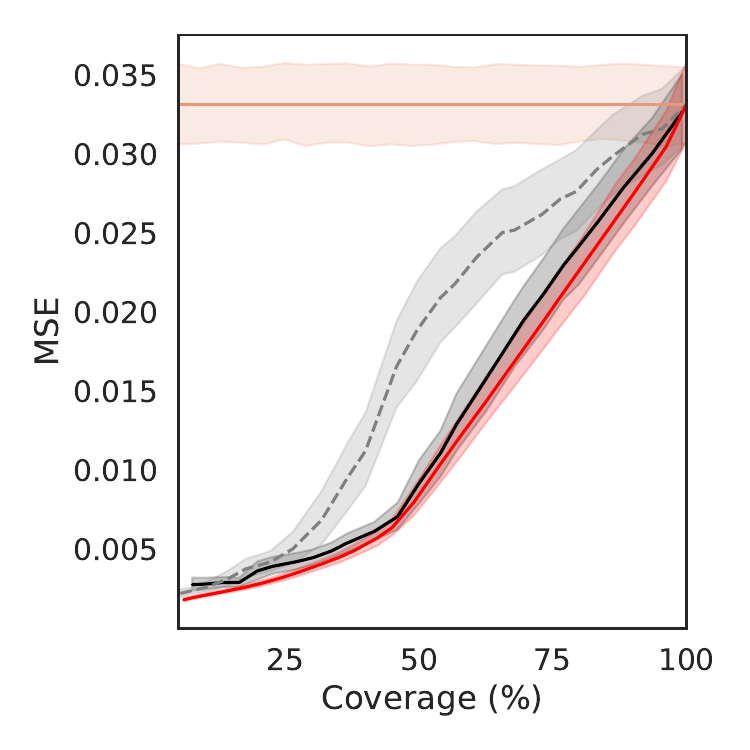}
    \caption{Toxigen}
  \end{subfigure}
  \medskip
    \begin{subfigure}[b]{0.4\linewidth}
    \includegraphics[trim={0 9cm 0 0}, width=\linewidth]
    {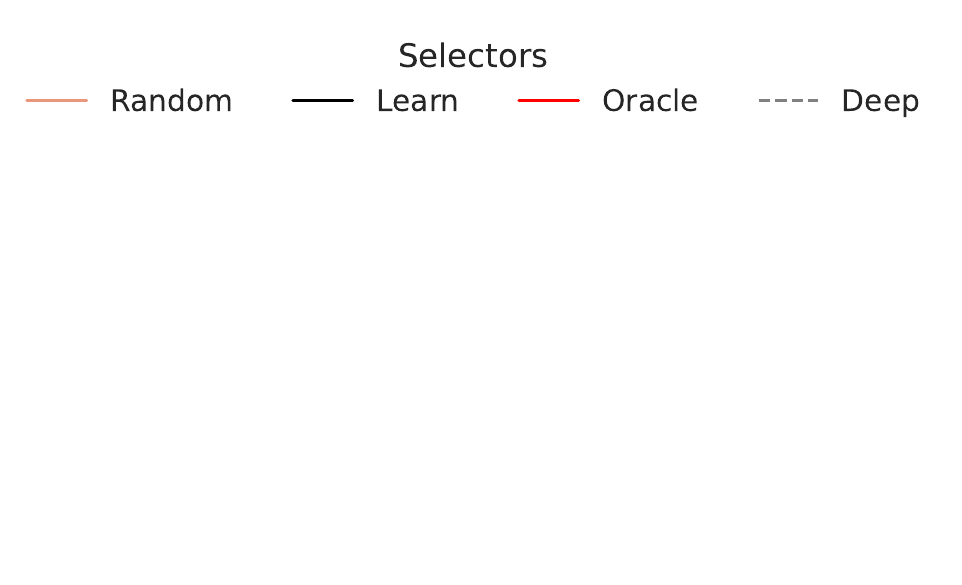}
  \end{subfigure}
  \caption{Coverage ($\alpha$) vs. test MSE from the high-quality explanation.
  The MSE is computed over the points such that $\tau_{\alpha}(\bx) = 1$, i.e., predicted to be high-quality for a given coverage (x-axis).
  When coverage is $100\%$, the MSE is the average performance for the amortized explainer.
  }
  \label{fig:coverage_MSE}
\end{figure}

\begin{figure}[!t]
  \centering
  \medskip
    \begin{subfigure}[b]{0.24\linewidth}
    \includegraphics[width=\linewidth]{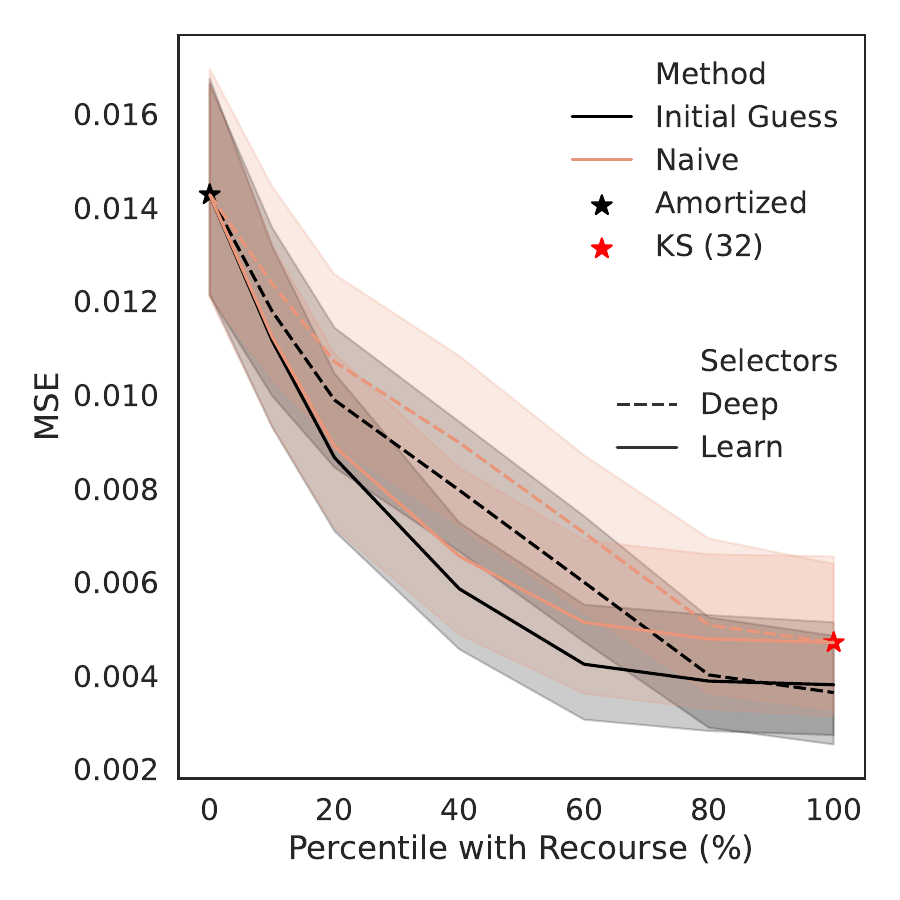}
    \caption{UCI-Adult}
  \end{subfigure}
\begin{subfigure}[b]{0.24\linewidth}
    \includegraphics[width=\linewidth]{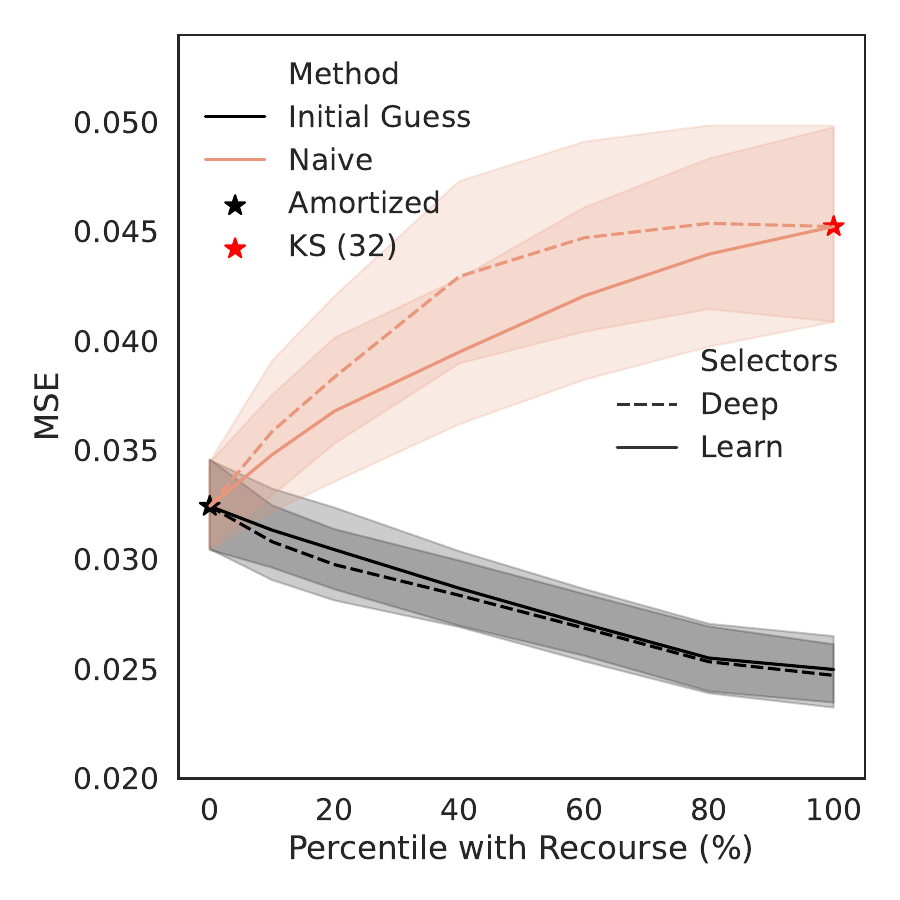}
    \caption{UCI-News}
  \end{subfigure}
  \begin{subfigure}[b]{0.24\linewidth}
    \includegraphics[width=\linewidth]{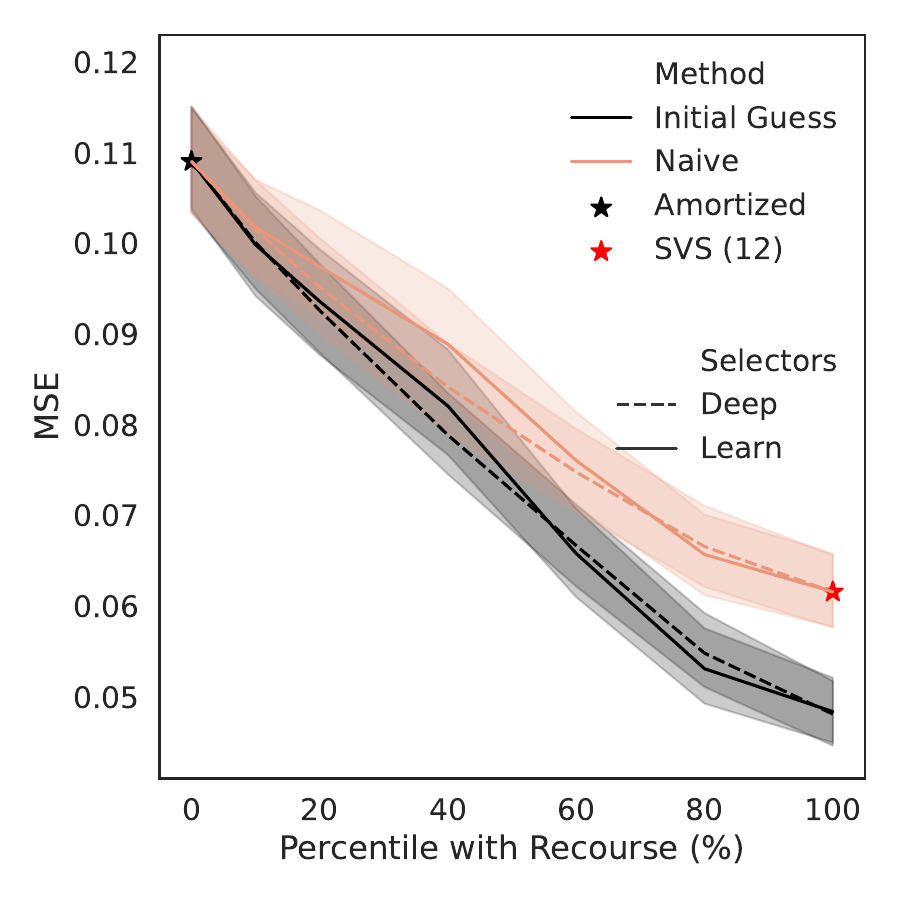}
    \caption{Yelp Review}
  \end{subfigure}
  \begin{subfigure}[b]{0.24\linewidth}
    \includegraphics[width=\linewidth]{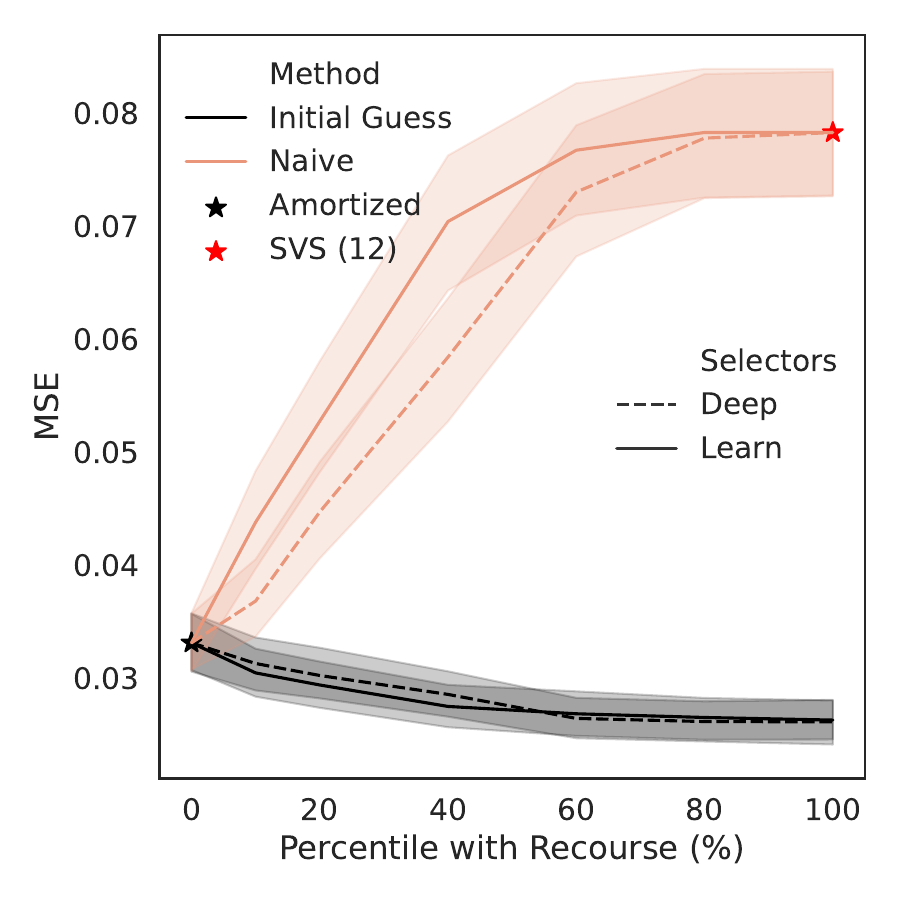}
    \caption{Toxigen}
  \end{subfigure}
  \caption{Fraction ($1-\alpha$) of points that receive explanations with initial guess (x-axis) vs.~MSE of selective explanations w.r.t.~high-quality explanations (y-axis). Naive uses $\lambda_h = 0$ while Initial guess uses $\lambda_h$ in \eqref{eq:lambda_aprox_definition}.
  MSE is computed across all points in the test dataset.
  }
  \label{fig:initial_guess_mse}
\end{figure}

\paragraph{Explanations with Initial Guess:}
\label{ssec:Initial_guess_effect}
In Figure \ref{fig:initial_guess_mse} we compare explanations with initial guess (Definition \ref{def:explanation_initial_guess}) to only using Monte Carlo explanations to provide improve for low-quality explanations, i.e., $\lambda_h = 0$ which we call Naive.
When the MSE from the Monte Carlo is smaller than from the amortized explainer ((a) and (c)), employing explanations with initial guess results in a smaller MSE compared to naively using the Monte Carlo explainer. This suggests that despite their lower quality, the amortized explanations contain valuable information that can be used.
When the Monte Carlo %
is larger than the amortized MSE ((b) and (d)), naive worsens the MSE while explanations with initial guess reduce the MSE, even when using poorer-quality Monte Carlo explanations.
We used KS-32 for the Tabular datasets and SVS-12 for the textual datasets\footnote{In Appendix \ref{apx:Initial_guess_effect}, we also show that selective explanations improve the MSE while maintaining the same level of Spearman's correlation as the Naive approach.}.

\begin{figure}[t]
  \centering
  \begin{subfigure}[b]{0.24\linewidth}
    \includegraphics[width=\linewidth]{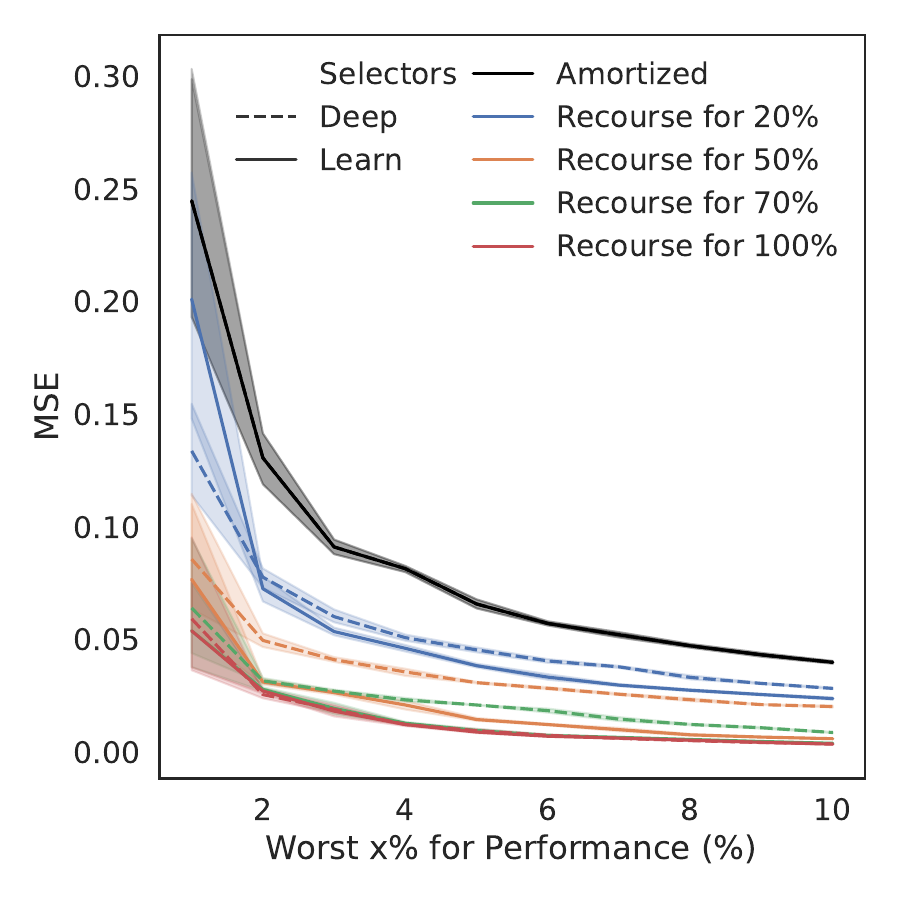}
  \end{subfigure}
  \begin{subfigure}[b]{0.24\linewidth}
    \includegraphics[width=\linewidth]{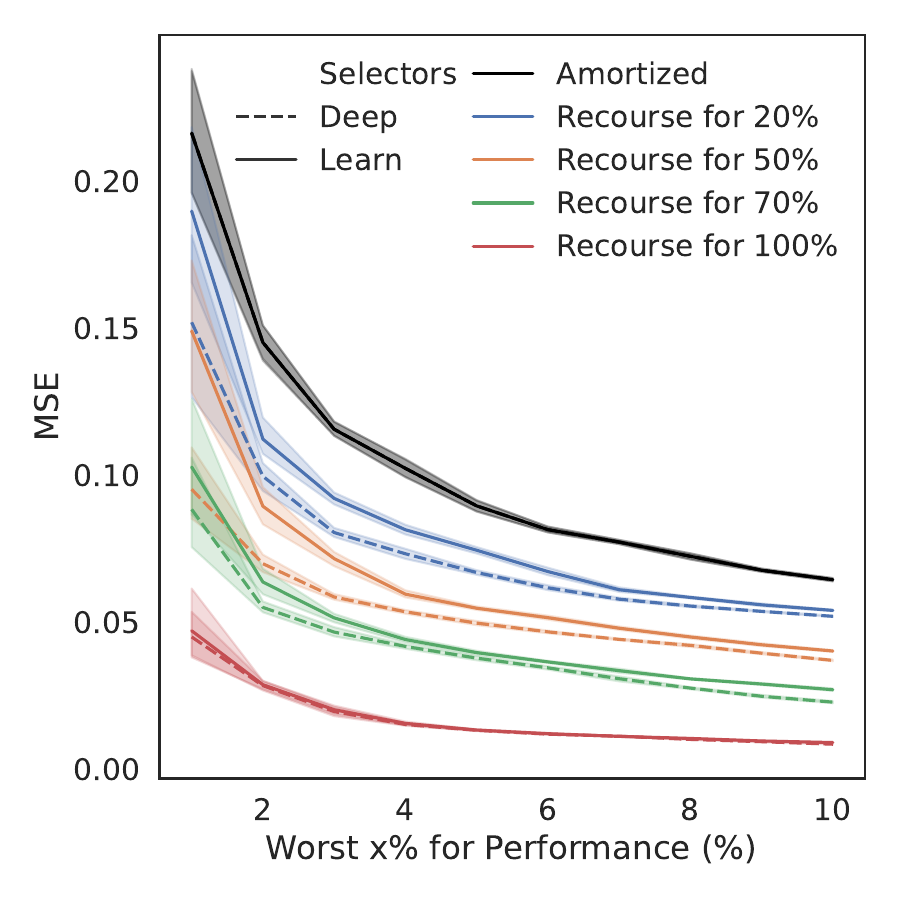}
  \end{subfigure}
  \begin{subfigure}[b]{0.24\linewidth}
    \includegraphics[width=\linewidth]{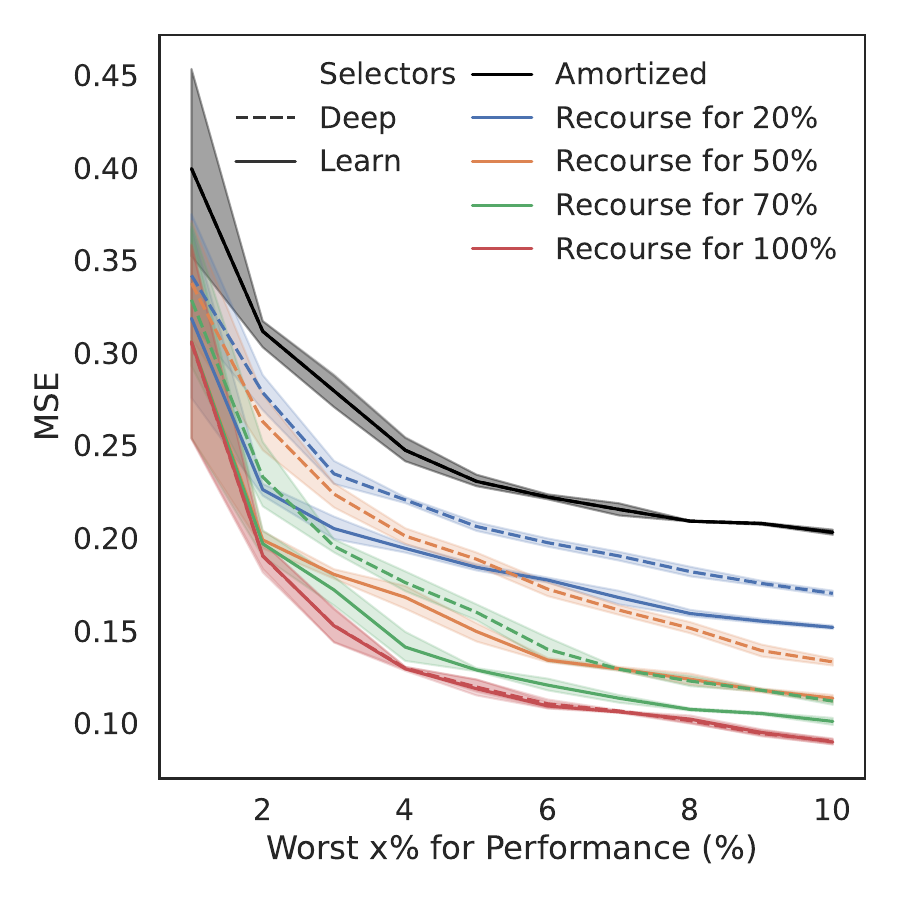}
  \end{subfigure}
  \begin{subfigure}[b]{0.24\linewidth}
    \includegraphics[width=\linewidth]{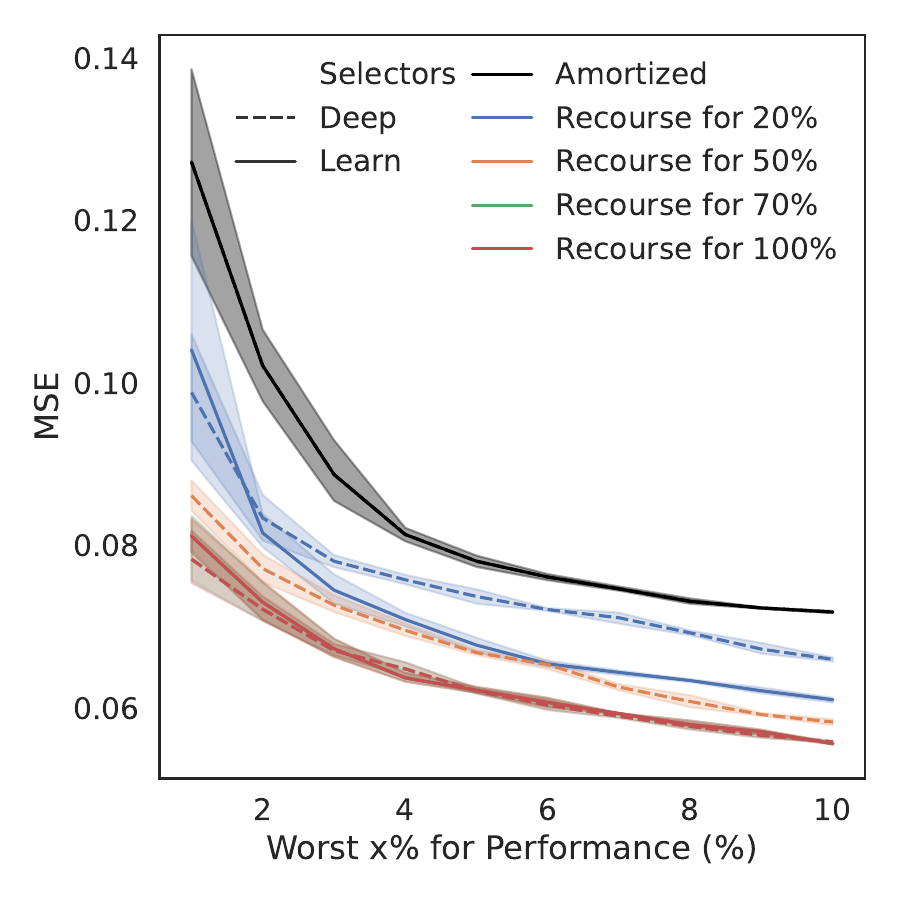}
  \end{subfigure}
  \medskip

    \begin{subfigure}[b]{0.24\linewidth}
    \includegraphics[width=\linewidth]{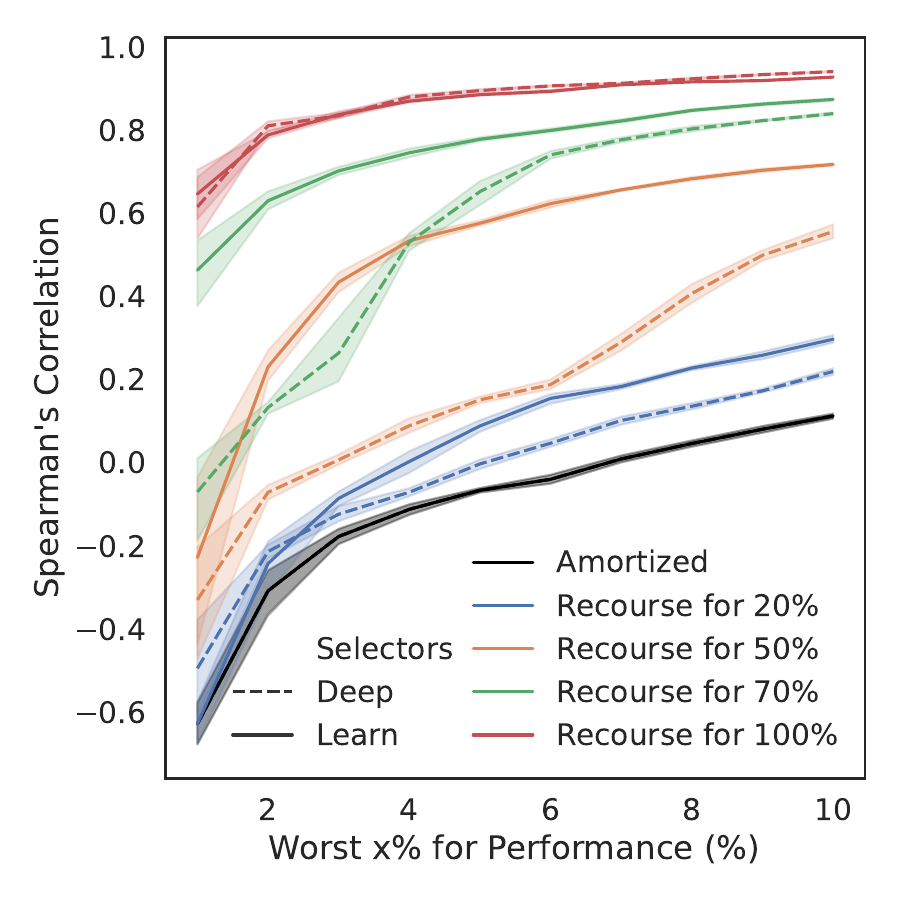}
    \caption{UCI-Adult}
  \end{subfigure}
  \begin{subfigure}[b]{0.24\linewidth}
    \includegraphics[width=\linewidth]{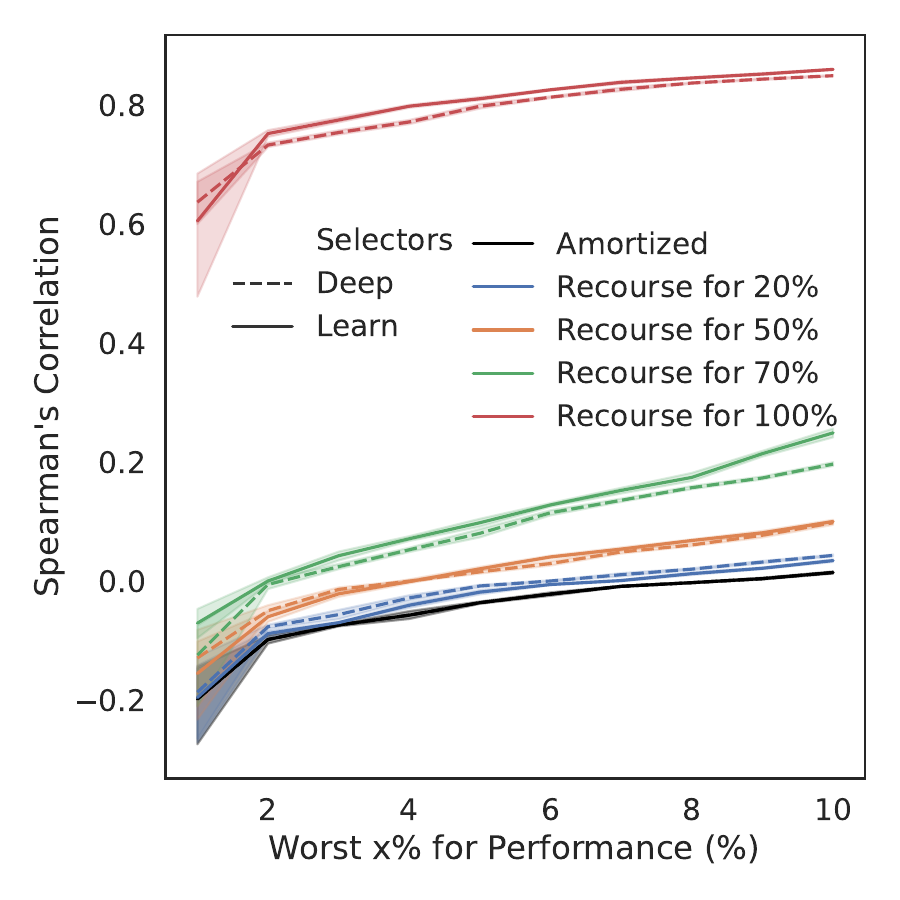}
    \caption{UCI-News}
  \end{subfigure}
  \begin{subfigure}[b]{0.24\linewidth}
    \includegraphics[width=\linewidth]{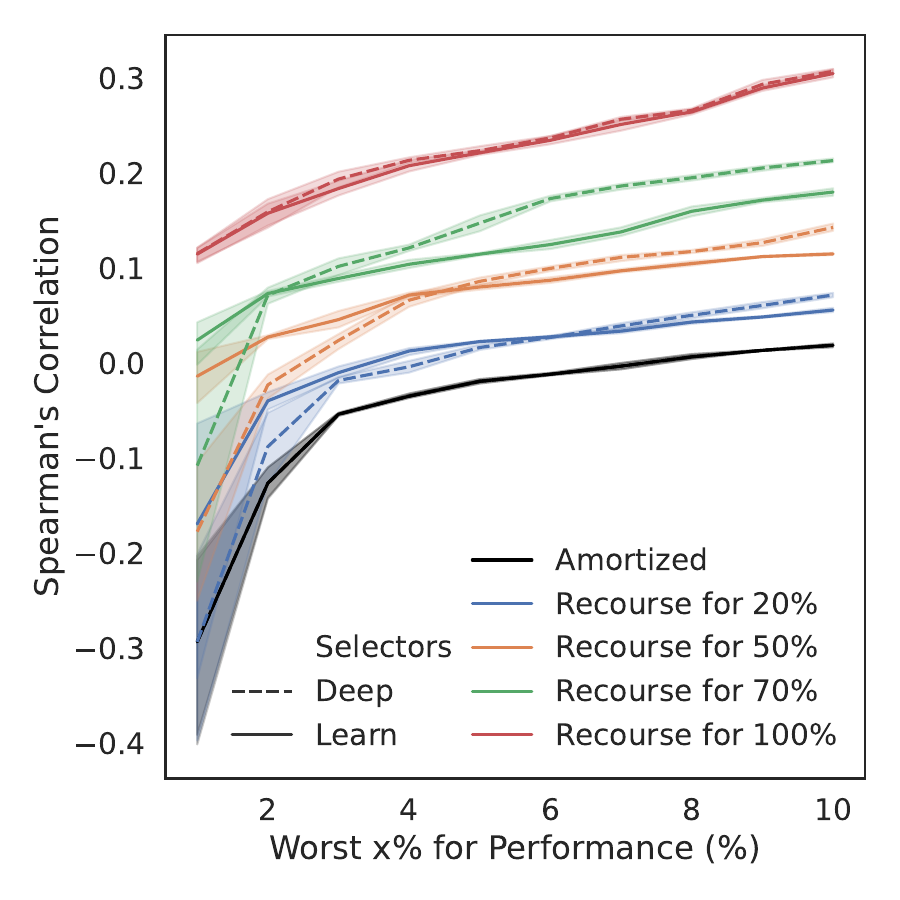}
    \caption{Yelp Review}
  \end{subfigure}
  \begin{subfigure}[b]{0.24\linewidth}
    \includegraphics[width=\linewidth]{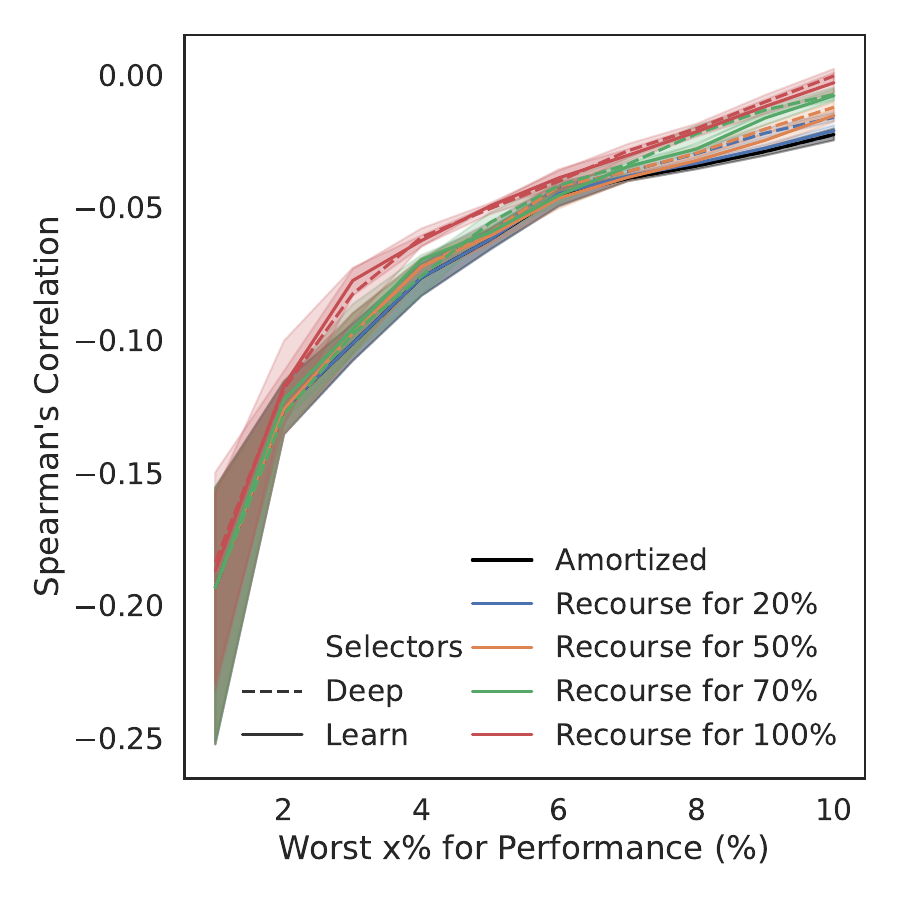}
    \caption{Toxigen}
  \end{subfigure}
  \caption{MSE (top) and Spearman's correlation (bottom) for %
  explanations with the worst performance (highest MSE and smallest Spearman's) in $\calD_{\texttt{test}}$. Colors indicate different percentages $1 - \alpha$ of points receiving explanations with initial guess: $20\%$ (blue), $50\%$ (orange), $70\%$ (green), $100\%$ (red), and amortized explanations $0\%$ (black). Performance is computed in each quantile.}
  \label{fig:performance_in_bottom}
\end{figure}

\paragraph{Worst Case Performance Improvement:}
\label{ssec:improving_worst_case}
In Figure \ref{fig:performance_in_bottom}, we analyze the performance of selective explanations for varying coverages (both in terms of MSE and Spearman's correlation) for the points that receive the worst-performing explanations.
We observe that selective explanations, even with only $20\%$ of points receiving explanations with initial guess, increase Spearman's correlation and decrease MSE consistently across datasets.
Remarkably, when providing explanations with initial guess for $20\%$ of the population in the Yelp dataset (Figure \ref{fig:performance_in_bottom} (c)), selective explanations result in Spearman's correlation for the worst $4\%$ of points that is better than that for the worst $10\%$ from the amortized explainer -- even clearer in the UCI-Adult dataset.
We use SVS-3 for the tabular datasets and  SVS-12 for the text datasets.
\begin{wrapfigure}{r}{8cm}
\centering
\vspace{-0.8cm}
  \begin{subfigure}[b]{0.45\linewidth}
    \includegraphics[width=\linewidth]{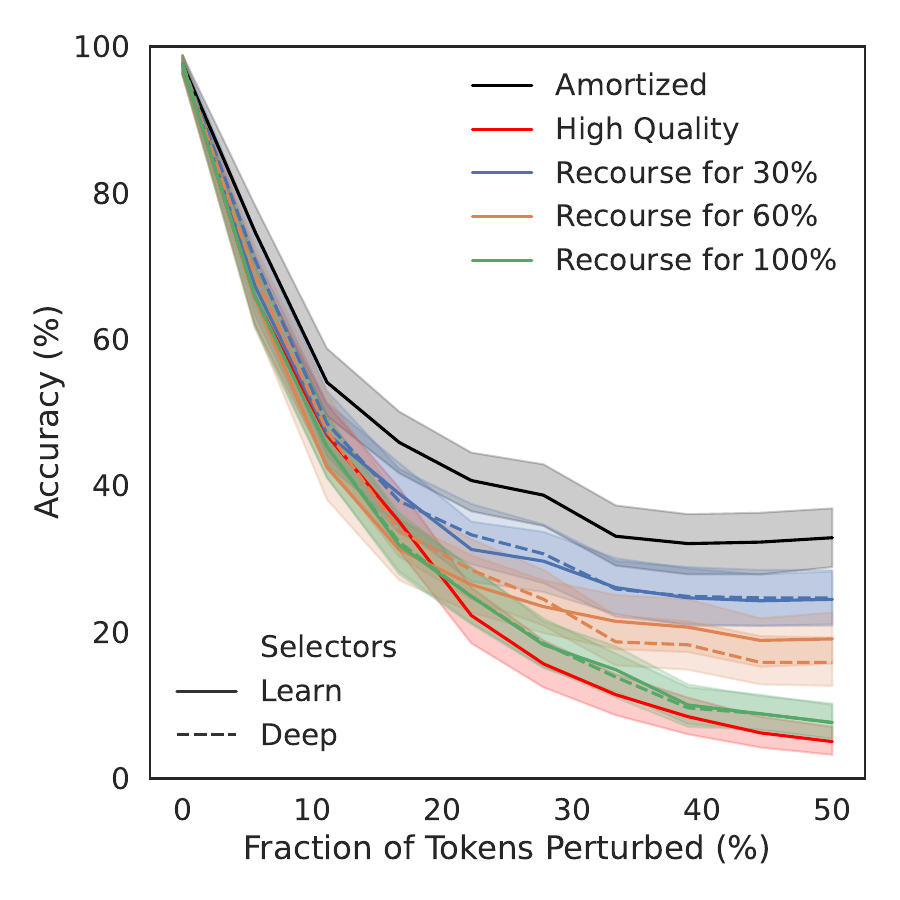}
     \caption{Yelp Review}
  \end{subfigure}
  \begin{subfigure}[b]{0.45\linewidth}
    \includegraphics[width=\linewidth]{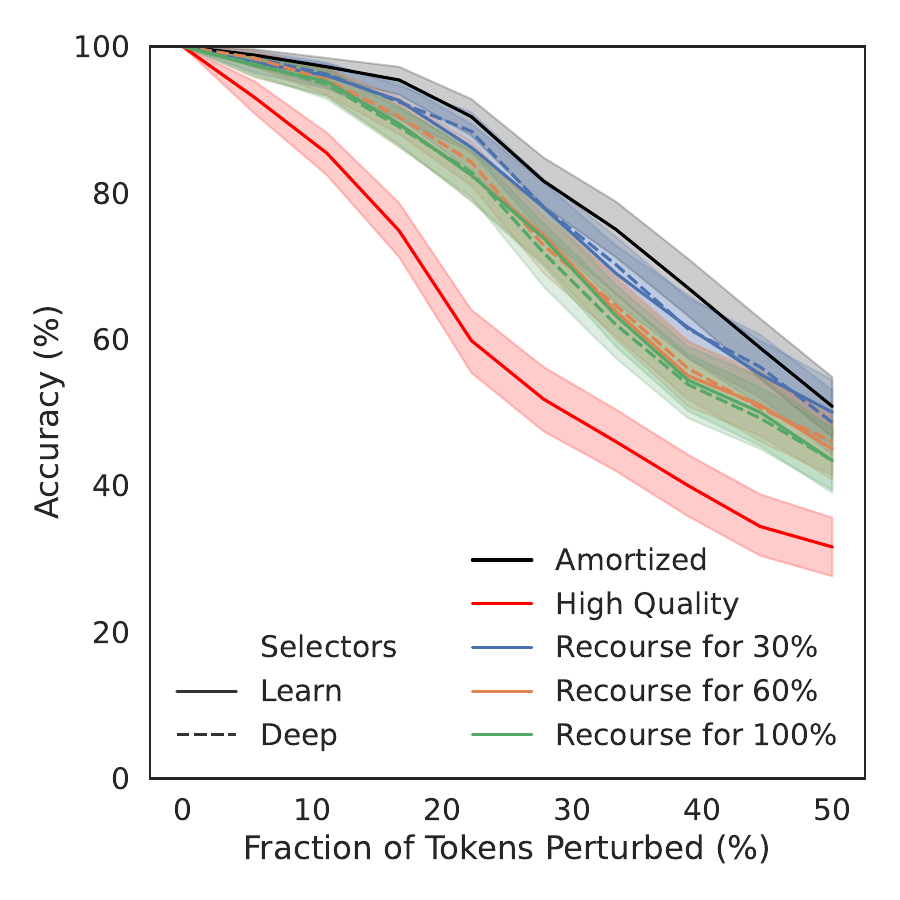 }
     \caption{Toxigen}
  \end{subfigure}
  \caption{Model accuracy (y-axis) when removing the tokens with the highest attribution scores according to the amortized explainer (black), selective explanations with recourse for $30\%$ (blue), $60\%$ (orange), and $100\%$ (green) of points, and high-quality explanations (red).}
    \label{fig:local_fidelity}  
    \vspace{-1.2cm}
\end{wrapfigure}
\paragraph{Perturbation Curve:} 
\label{ssec:perturbation_curve}
Figure \ref{fig:local_fidelity} shows that selective explanations increase the local fidelity of the amortized explainer and that the local fidelity increases with the percentage of points that receive explanations with initial guess (i.e., decreases with coverage).
Both Yelp and Toxigen models are receiving recourse by using SVS-12.
Notably, for Yelp (Figure \ref{fig:local_fidelity} (a)), when providing explanations with initial guess for $60\%$ of the points and using the amortized explainer the other $40\%$ of the time, we achieve local fidelity that is close to the computationally expensive high-quality explanations for the $30\%$ most important tokens.

\section{Final Remarks}
\label{sec:conclusion}

\paragraph{Conclusion:} 
We propose \emph{Selective explanations} that first identify which inputs would receive a low-quality but computationally cheap explanation (amortized) and then perform model inferences to improve the quality of these explanations.
Specifically, we propose \emph{explanations with initial guess} to improve the quality of explanations by combining computationally cheap explanations (amortized) with more expensive explanations (Monte Carlo) using an optimized combination function, improving the explanation performance beyond both explanations.
We perform experiments in large language models and tabular data classifiers empirically demonstrating the efficacy of selective explanations.
Our experiments indicate that selective explanations (i) efficiently identify points that the amortized explainer would produce low-quality explanations, (ii) improve the quality of the worst-quality explanations, and (iii) improve the local fidelity of amortized explanations. 

\textbf{Limitations:} Selective explanations can be applied to any feature attribution method for which amortized and Monte Carlo explainers were developed. However, our empirical results focus on Shapley values. We leave the application of selective explanations to other attribution methods for future work.
Additionally, we focus on large language models (LLMs) used for text classification. Consequently, we do not explore image classifiers, which may also interest the interpretability community.

\section*{Acknowledgements}
The authors thank Amit Dhurandhar for early discussions on the trustworthiness of amortized explainers.
This material is based upon work supported by the National Science Foundation under grants CAREER 1845852, CIF 1900750, CIF 2312667, and FAI 2040880, and awards from Google Research and Amazon.

\bibliographystyle{unsrtnat}
\bibliography{references}

\clearpage
\appendix
\setcounter{theorem}{0}
\setcounter{proposition}{0}

\newpage
\section{Overview}
In this supplementary material we provide the following information:
\begin{itemize}

    \item Appendix \ref{apx:additional)explanations} discuss other high-quality and Monte Carlo explainers.

    \item Appendix \ref{sec:coverage_for_budget} discuss a guide to select the coverage $\alpha$ when the agent providing selective explanations has a budget for the average number of inferences to provide an explanation.
    
    \item Appendix \ref{apx:More_experiments} shows more experimental results on selective explanations.

    \item Appendix \ref{apx:proofs} shows the proofs for the theoretical results in Section \ref{sec:recourse}.
\end{itemize}

\section{Additional Explanation Methods}
\label{apx:additional)explanations}
In this section, we describe high-quality, Monte Carlo, and amortized explainers with further details.

\subsection{High-Quality Explainers}

\textbf{Shapley Values (SHAP)}  \citep{Lundberg_shap}
is a \textbf{high-quality} explainer
that attributes a value $\phi_i$ for each feature $x_i$ in $\bx = (x_1, ..., x_d)$ which is the marginal contribution of feature $x_i$ if the model was to predict $\by$ \eqref{eq:shapley_values}.
\begin{equation}
    \phi_i(\bx, \by) = \frac{1}{d} \sum_{S \subset [d]/\{i\}}  {d - 1 \choose |S|}^{-1} \left( h_{\by}(\bx_{S \cup \{i\}}) - h_{\by}(\bx_{S}) \right).
    \label{eq:shapley_values_apx}
\end{equation}
SHAP has several desirable properties and is widely used. However, as \eqref{eq:shapley_values} indicates, computing Shapley values and the attribution vector $\hq(\bx, \by) = (\phi_1(\bx, \by), ..., \phi_d(\bx, \by))$ requires $2^d$ inferences from $h$, making SHAP impractical for large models where inference is costly.
This has motivated several approximation methods for SHAP, discussed next\footnote{We also discuss Lime and its amortized version in Appendix \ref{apx:additional)explanations}}.

\paragraph{Local Interpretable Explanations (Lime).} Lime is another feature attribution method \citep{tulio_lime} widely used to provide feature attributions. 
It relies on selecting combinations of features, removing these features from the input to generate perturbations, and using these perturbations to approximate the black box model $h$ locally by a linear model.
The coefficients of the linear model are considered to be the attribution of each feature.
Formally, given a weighting kernel $\pi(S)$ and a penalty function $\Omega$, the attribution produced by lime are given by
\begin{equation}
    (\phi, a) = \argmin_{\phi \in \mathbb{R}^{d}, a \in \Reals} \sum_{S \subset [d]} \pi(S) \left( h(\bx_{S}) - a_0 - \sum_{i \in S} \phi_i\right),
    \label{eqapx:Lime}
\end{equation}
where $\hq(\bx, \by) = \phi$.
As in SHAP, to compute the feature attributions using lime, we need to perform a large number of model inferences, which is prohibitive for large models.

\subsection{Monte Carlo Lime}

\textbf{Shapley Value Sampling (SVS)} \citep{Mitchell_svs} is a \textbf{Monte Carlo} explainer that approximates SHAP by restricting the sum in \eqref{eq:shapley_values} to specific permutations of feature. SVS computes the attribution scores by uniformly sampling $m$ features permutations $S_1, ..., S_m$ restricting the sum in \eqref{eq:shapley_values} and performing $n = md +1$ inferences.
We denote SVS that samples $m$ feature permutations by SVS-$m$.

\textbf{Kernel Shap (KS)} \citep{Lundberg_shap} is a \textbf{Monte Carlo} explainer that approximate the Shapley values using the fact that SHAP can be computed by solving the optimization problem
\begin{equation}
    (\phi, a) = \argmin_{\phi \in \Reals^{d}, a \in \Reals} \sum_{i = 1}^{n} \pi(S_i) \left( h(\bx_{S_i}) - a_0 - \sum_{j \in S_i} \phi_j\right),
    \label{eq:Lime_apx}
\end{equation}
using $\pi(S) = {d \choose |S|} |S| (d - |S|)$ and where $\MC^n(\bx, \by) = \phi$.
Kernel Shap samples $n > 0$ feature combinations $S_1, ..., S_n$ and define the feature attributions to be given by the coefficients $\phi$. 
We refer to Kernel Shap using $n$ inferences as KS-$n$. 
We use the KS-$n$ from the Captum library \citep{captum_lib} for our experiments.

\paragraph{ Sample Constrained Lime.} To approximate the attributions from Lime, we consider the sample-contained version of \eqref{eq:Lime_apx}.
Instead of sampling all feature combinations in $[d]$, we only uniformly sample a fixed number $n$ of feature combinations $S_1, ..., S_n$.
For our experiments, shown in the appendix, we use the Sample Constrained Lime from the Captum library \citep{captum_lib}.

\subsection{Amortized Explainers}

\textbf{Stochastic Amortization} \citep{covert2024stochastic} is a \textbf{Amortized} explainer that uses noisy Monte Carlo explanations to learn high-quality explanations. \citet{covert2024stochastic} trained an amortized explainer $\amortized \in \calF$ in a hypothesis class $\calF$ (we use multilayer perceptrons) that takes an input and predicts an explanation.
Specifically, taking the amortized explainer to be the solution of the training problem given in \eqref{eq:training_amortized}.
\begin{equation}
    \amortized \in \argmin_{f \in \calF} \sum_{(\bx, \by) \in \calD_{\text{train}}} \normEuc{ f(\bx, \by) - \MC^{n}(\bx, \by)}^{2}.
    \label{eq:training_amortized_apx}
\end{equation}
We are interested in explaining the predictions of large models for text classification.
However, the approach in \eqref{eq:training_amortized} is only suitable for numerical inputs.
Hence, we follow the approach from \citet{Yang2023EfficientSV} to explain the predictions of large language models, explained next.

\textbf{Amortized Shap for LLMs} \citep{Yang2023EfficientSV}
is a \textbf{Amortized} explainer similar to the one in \eqref{eq:training_amortized} but tailored for LLMs. 
First, the authors note that they can use the LLM to write all input texts $\bx$ as a sequence of token embedding $[e_1(\bx), ..., e_{|\bx|}(\bx)]$ where $e_i(\bx) \in \mathbb{R}^{d}$ denotes the LLM embedding for the $i$-th token contained in the input text $\bx$ and $|\bx|$ is the number of tokens in the input text.
Second, they restrict $\calF$ in \eqref{eq:training_amortized} to be the set of all linear regressions that take the token embeddings  and output the token attribution score.
Then, they solve the optimization problem in
\begin{equation}
    W \in \argmin_{W \in \mathbb{R}^{d}, b \in \mathbb{R}} \sum_{(\bx, \by) \in \calD_{\text{train}}} \sum_{j = 1}^{|\bx|} \normEuc{W^{T}e_j(\bx) + b - \MC^{n}(\bx, \by)_j}^{2},
    \label{eq:training_amortized_llm_apx}
\end{equation}
and define the amortized explainer as
$
    \amortized(\bx) = (W^Te_1(\bx) + b, ..., W^Te_{|\bx|}(\bx) + b).
$

We use stochastic amortization to produce amortized explainers for tabular datasets and Amortized Shap for LLMs to produce explainers for LLM predictions. Both explainers are trained using SVS-12 as $\MC^{n}$.

\section{Selecting Coverage for a Given Inference Budget}
\label{sec:coverage_for_budget}

\paragraph{Determining Coverage from Inference Budget:} Providing explanations with initial guess increases the number of model inferences from 1 when using solely the amortized explainer to $n+1$.
However, a practitioner may have a budget of inferences, i.e., a maximum average number of inferences they are willing to perform to provide an explanation.
We formalize the notion of inference budget in Definition \ref{def:inference_budget}.

\begin{definition}[Inference Budget]
\label{def:inference_budget} 
Denote by $\text{N}(\selective(\bx, \by))$ the number of model inferences to produce the explanation $\selective(\bx, \by)$.
The inference budget $\text{N}_{\texttt{budget}} \in \mathbb{N}$ is the maximum average number of inferences a practitioner is willing to perform per explanation, i.e., it is such that
\begin{equation}
    \text{N}_{\texttt{budget}} \geq \EE{\text{N}(\selective(\bx, \by))}.
    \label{eq:inference_budget}
\end{equation}
\end{definition}

Once an inference budget $\text{N}_{\texttt{budget}}$ is defined, the coverage $\alpha$ should be set to follow it.
In Proposition \ref{prop:coverage_for_budget}, we show the minimum coverage for the selective explanations to follow the inference budget. 
\begin{proposition}[Coverage for Inference Budget]
\label{prop:coverage_for_budget}
Let $\text{N}_{\texttt{budget}} \geq 1$ be the inference budget, and assume that the Monte Carlo method $\MC^n(\bx, \by)$ uses $n$ model inferences.
Then, the coverage level $\alpha$ should be chosen such that
\begin{equation}
    \frac{n + 1 - \text{N}_{\texttt{budget}}}{n} = \min_{\alpha \in [0, 1]} \alpha, \text{  such that } \EE{\text{N}(\selective(\bx, \by))} \leq \text{N}_{\texttt{budget}} .
\end{equation}
Recall that SVS-$m$ performs $n = 1 + dm$ inferences ($\bx \in \mathbb{R}^d$), and KS-$m$ performs $n = m$ inferences.
\end{proposition}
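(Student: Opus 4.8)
The plan is to compute $\EE{\text{N}(\selective(\bx, \by))}$ explicitly as a function of the coverage $\alpha$ and then invert this relationship to solve for $\alpha$. First I would observe from Definition \ref{def:selective_explanations} that the number of inferences $\text{N}(\selective(\bx, \by))$ is a deterministic function of the selection function: when $\tau_{\alpha}(\bx) = 1$ we use the amortized explainer alone, costing exactly $1$ inference, and when $\tau_{\alpha}(\bx) = 0$ we use the explanation with initial guess \eqref{eq:initial_guess}, which evaluates $\MC^n(\bx, \by)$ at a cost of $n$ inferences plus the $1$ inference already implicit in the amortized explainer, for a total of $n + 1$. (One should check whether the amortized inference is double-counted; I would argue the amortized explainer requires $1$ inference and the Monte Carlo component requires $n$ additional ones, giving $n+1$ total in the recourse branch.)

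Next I would take the expectation over the input distribution, splitting on the event $\{\tau_{\alpha}(\bx) = 1\}$. By construction of the selection function in \eqref{eq:decision_rule} and the calibration of $t_{\alpha}$ in \eqref{eq:selecting_threshold}, the fraction of points with $\tau_{\alpha}(\bx) = 1$ is exactly $\alpha$ (the coverage), so that
\begin{equation}
    \EE{\text{N}(\selective(\bx, \by))} = \alpha \cdot 1 + (1 - \alpha)(n + 1) = (n+1) - \alpha n.
    \label{eq:expected_inferences}
\end{equation}
This is a strictly decreasing linear function of $\alpha$: larger coverage means fewer points receive the expensive recourse, hence fewer inferences on average. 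Imposing the budget constraint $\EE{\text{N}(\selective(\bx, \by))} \leq \text{N}_{\texttt{budget}}$ from \eqref{eq:inference_budget} and using \eqref{eq:expected_inferences} yields $(n+1) - \alpha n \leq \text{N}_{\texttt{budget}}$, which rearranges to $\alpha \geq \frac{n + 1 - \text{N}_{\texttt{budget}}}{n}$.

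Since the constraint is a lower bound on $\alpha$ and we wish to minimize inferences subject to feasibility — equivalently, the proposition asks for the minimal feasible coverage — the minimizing value is the left endpoint $\alpha = \frac{n + 1 - \text{N}_{\texttt{budget}}}{n}$, which establishes the claimed identity. Finally I would substitute the per-method inference counts: SVS-$m$ performs $n = 1 + dm$ inferences and KS-$m$ performs $n = m$ inferences, recovering the concrete formulas stated. The main subtlety — and the only place requiring care rather than routine algebra — is pinning down the exact inference count in each branch of \eqref{eq:selective_explanation}, namely justifying the $n+1$ figure for the recourse branch and confirming that the calibration \eqref{eq:selecting_threshold} delivers coverage exactly (rather than at least) $\alpha$ in expectation, so that \eqref{eq:expected_inferences} holds with equality and the inversion is tight.
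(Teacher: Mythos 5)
Your proposal is correct and follows essentially the same route as the paper: compute $\EE{\text{N}(\selective(\bx, \by))} = \alpha + (1-\alpha)(n+1)$, impose the budget constraint, and rearrange to obtain the lower bound $\alpha \geq \frac{n+1-\text{N}_{\texttt{budget}}}{n}$, whose left endpoint is the minimal feasible coverage. Your added remarks on the $n+1$ count in the recourse branch and on the calibration delivering coverage exactly $\alpha$ are sensible caveats that the paper's proof silently assumes.
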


\section{More Experimental Results}
\label{apx:More_experiments}
In this section, we (i) give further implementation details and (ii) discuss further empirical results. 

\subsection{More Details on Experimental Setup}
\label{apx:implementation_details}

\paragraph{High-Quality Explanations:} We define the high-quality explanations for the tabular datasets to be given by Kernel Shap with as many inferences as needed for convergence, using the Shapley Regression library \citep{Covert2020ImprovingKP}.
For the textual dataset, following \citep{Yang2023EfficientSV}, we define the high-quality explanations to be given by Kernel Shap using $8912$ model inferences per explanation.

\paragraph{Amortized Explainers:} For the tabular datasets, we use the amortized explainer from \citep{covert2024stochastic} that we describe in Section \ref{sec:background}.
Specifically, we use a multilayer perceptrom model architecture to learn the shapley values for the tabular datasets.
For the textual datasets, we use the linear regression on token-level textual embeddings to learn the shapley values, as described in Section \ref{sec:background}.
Both amortized models learn from the training dataset of explanations generated using Shapley Value Sampling from the Captum library \citep{captum_lib} with parameter $12$, i.e., SVS-12.

\paragraph{Uncertainty Metrics:} We test the two proposed uncertainty metrics in Section \ref{sec:selecting}, namely, deep uncertainty and uncertainty learn.
For \textbf{deep uncertainty}, we run the training pipeline for the amortized explainers 20 times for each dataset we perform experiments on, resulting in 20 different amortized explainer that we use to compute \eqref{eq:DeepUncertainty}.
For \textbf{uncertainty learn}, we use the multilayer perceptrom as the hypothesis class with only one hidden layer. The hidden layer was composed of $\kappa = 3d$ neurons where $d$ is the dimension of the input vector $\bx \in \mathbb{R}^d$.
The uncertainty learn metric was trained on $\calD_{\texttt{train}}$, the same training dataset as the amortized explainers.

\paragraph{Dataset sizes:} We use 4000 samples from each dataset due to computational limitations on the computation of high-quality explanations used to evaluate selective explanations. All explanations were computed using the Captum library \citep{captum_lib}.
The dataset $\calD$ with $N = 4000$ samples was partitioned in three parts, $\calD_{\texttt{train}}$ with $50\%$ of points, $\calD_{\texttt{cal}}$ with $25\%$ of points, and $\calD_{\texttt{test}}$ with the other $25\%$ of points.

\paragraph{Computational Resources:} All experiments were run in a A100 40 GB GPU. For each dataset, we compute different Monte Carlo explanations. For the UCI-News dataset, the high quality explanations took 4:30 hours to be generate until convergence while for UCI-Adult it took 3:46 hours. For the tabular datasets, all other Monte Carlo explainers were generated in less than 1 hour.
For the language models, the high-quality explanations with 8192 model inferences, took 18:51 hours for the Toxigen dataset and 20:00 hours for the Yelp Review datasets.
The other used Monte Carlo explanations took proportional (to the number of inferences) time to be generated.

\subsection{Uncertainty Measures Impact on Spearman's Correlation}
\label{apx:coverage_vs_performance}

Figure \ref{fig:coverage_sp} shows in the x-axis the coverage ($\alpha$) and in the y-axis the average Spearman's correlation of the selected amortized explanations from high-quality explanations using deep uncertainty (with 20 models) and the uncertainty learn to select low-quality explanations.
The Oracle\footnote{The oracle is computationally expensive because it requires access to high-quality explanations.} is computed by sorting examples by the smallest to higher MSE and computing the average Spearman's correlation in the bottom x-axis points accordingly to the MSE and is the best that can be done in terms of MSE.

Figure \ref{fig:coverage_sp} shows that the Oracle and proposed uncertainty metrics don't always select the points with the smallest Spearman's correlation first.
This implies that MSE and Spearman's correlation don't always align, i.e., there are points with high MSE and high Spearman's correlation at the same time.
However, we note that the uncertainty learns selector can be applied to \textbf{any} metric $\ell$ as we define in \eqref{eq:UncLearn} including Spearman's correlation and any combination of Spearman's correlation and MSE aiming to approximate both metrics.
Moreover, when the smallest MSE aligns with the highest Spearman's correlation, i.e., the oracle is decreasing in Spearman's correlation when the coverage increases (Figure \ref{fig:coverage_sp} (a) and (c)), the proposed uncertainty metrics also accurately detect the low-quality explanations in term of Spearman's correlation.

\begin{figure}[htb]
  \centering
    \begin{subfigure}[b]{0.24\linewidth}
    \includegraphics[width=\linewidth]{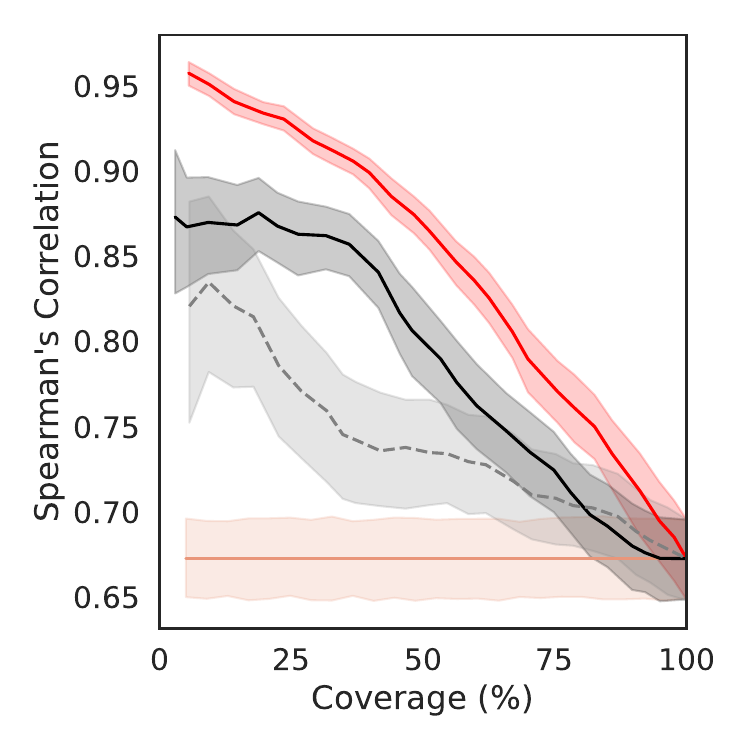}
    \caption{UCI-Adult}
  \end{subfigure}
\begin{subfigure}[b]{0.24\linewidth}
    \includegraphics[width=\linewidth]{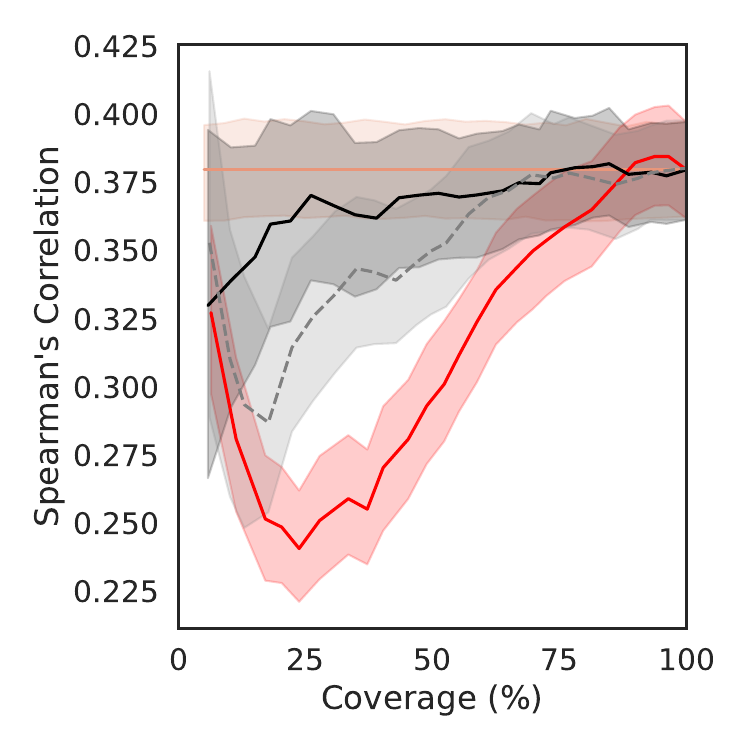}
    \caption{UCI-News}
  \end{subfigure}
  \begin{subfigure}[b]{0.24\linewidth}
    \includegraphics[width=\linewidth]{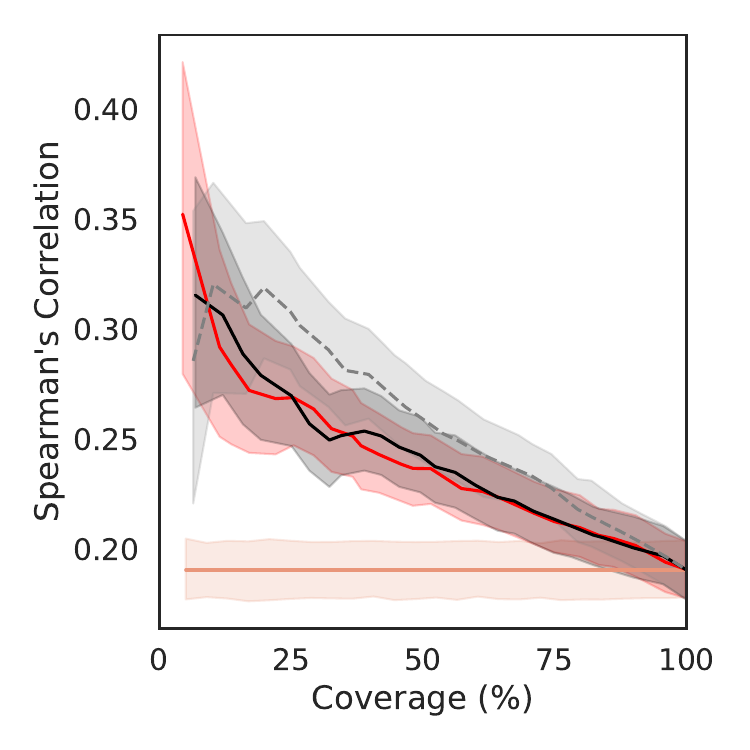}
    \caption{Yelp Review}
  \end{subfigure}
  \begin{subfigure}[b]{0.24\linewidth}
    \includegraphics[width=\linewidth]{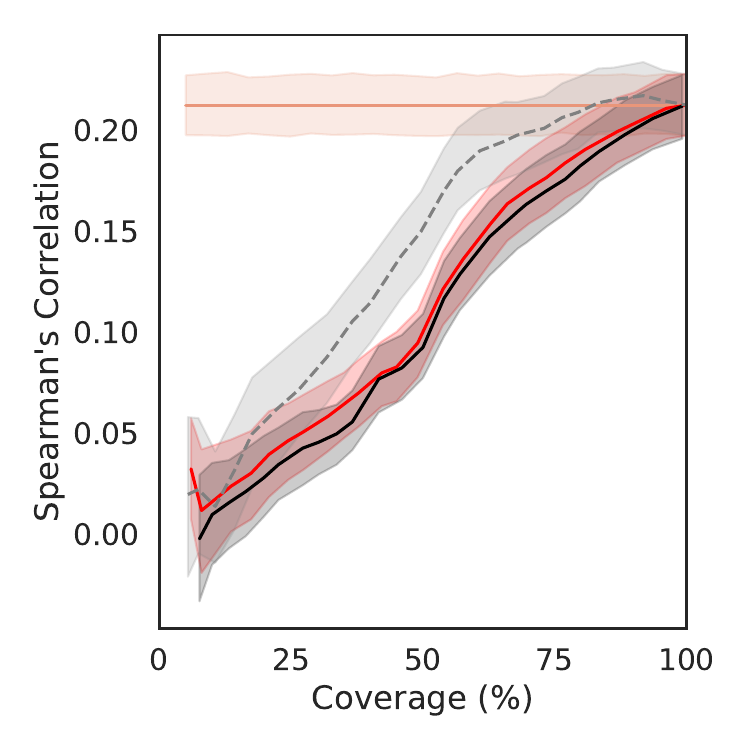}
    \caption{Toxigen}
  \end{subfigure}
  \medskip
    \begin{subfigure}[b]{0.4\linewidth}
    \includegraphics[trim={0 9cm 0 0}, width=\linewidth]
    {CoverageMSE/legend.pdf}
  \end{subfigure}
  \caption{Coverage vs. Spearman's correlation from the high-quality explanation. Coverage is the percentage of the points that the selection function predicts that will receive a higher-quality explanation, i.e., $\tau_t(\bx) = 1$. When coverage is $100\%$ Spearman's correlation is the average performance for the amortized explainer.}
  \label{fig:coverage_sp}
\end{figure}

\subsection{The Effect of Explanations with Initial Guess}
\label{apx:Initial_guess_effect}

In Figure \ref{fig:initial_guess_sp} we compare explanations with initial guess (Definition \ref{def:explanation_initial_guess}) to only using the Monte Carlo to provide recourse to the low-quality explanaitons, i.e., $\lambda_h = 0$ we call it Naive.
In all tested cases, Spearman's correlation of the Monte Carlo method is comparable to or larger than the amortized explainer.
Although selective explanations optimized for MSE by using explanations with initial guess (Definition \ref{def:explanation_initial_guess}), we observe that the Spearman's correlation of selective explanations is close to or larger than the naive method, once again, demonstrating the efficacy of selective explanations.

\begin{figure}[htb]
  \centering
    \begin{subfigure}[b]{0.24\linewidth}
    \includegraphics[width=\linewidth]{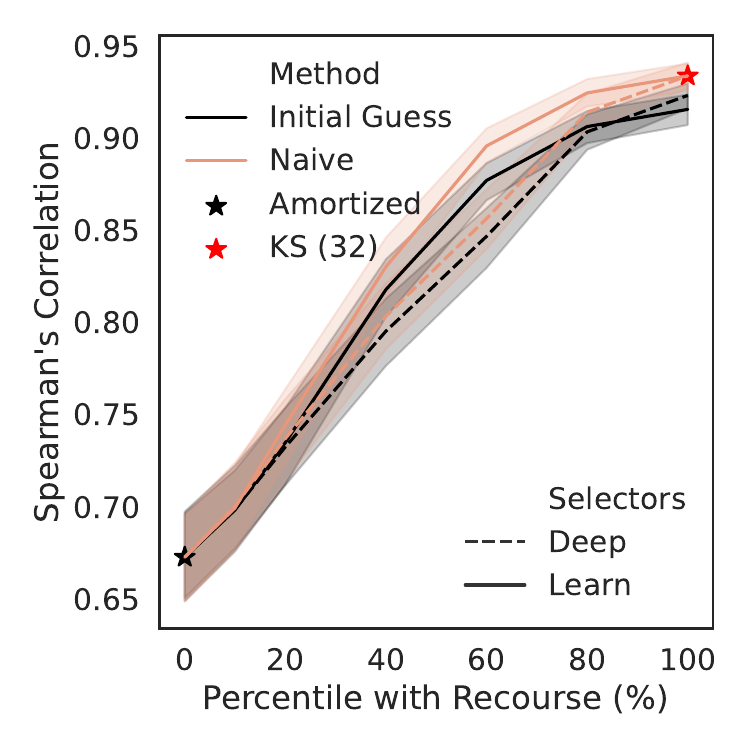}
    \caption{UCI-Adult}
  \end{subfigure}
\begin{subfigure}[b]{0.24\linewidth}
    \includegraphics[width=\linewidth]{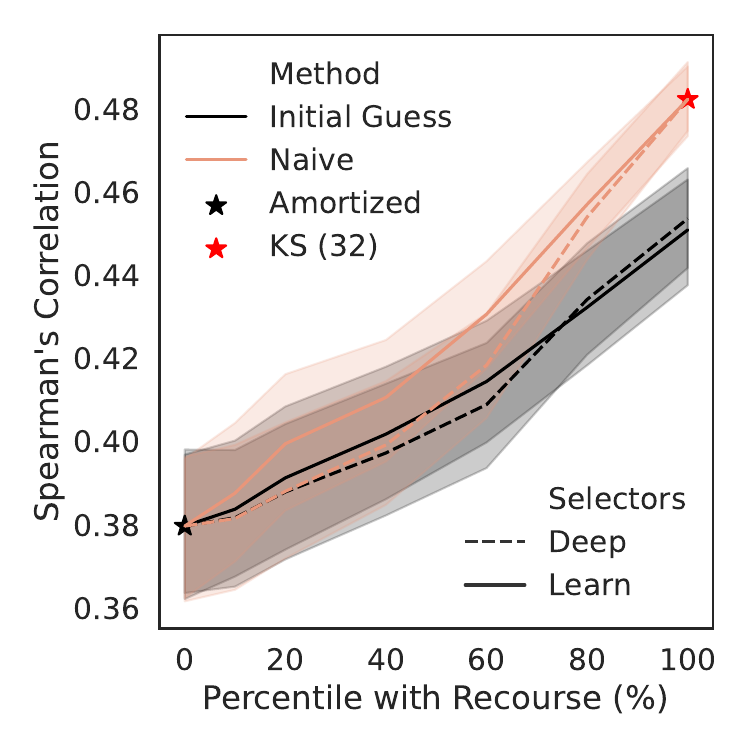}
    \caption{UCI-News}
  \end{subfigure}
  \begin{subfigure}[b]{0.24\linewidth}
    \includegraphics[width=\linewidth]{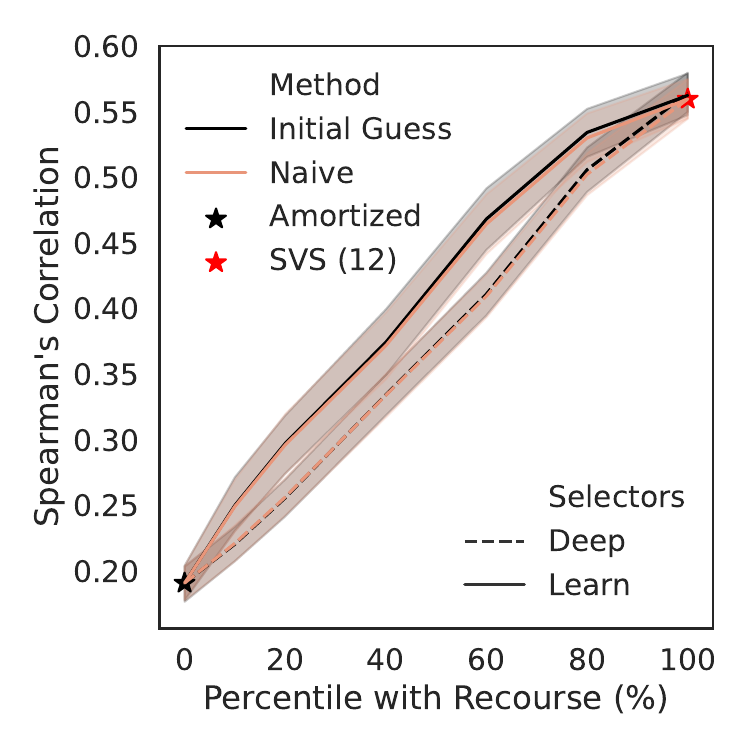}
    \caption{Yelp Review}
  \end{subfigure}
  \begin{subfigure}[b]{0.24\linewidth}
    \includegraphics[width=\linewidth]{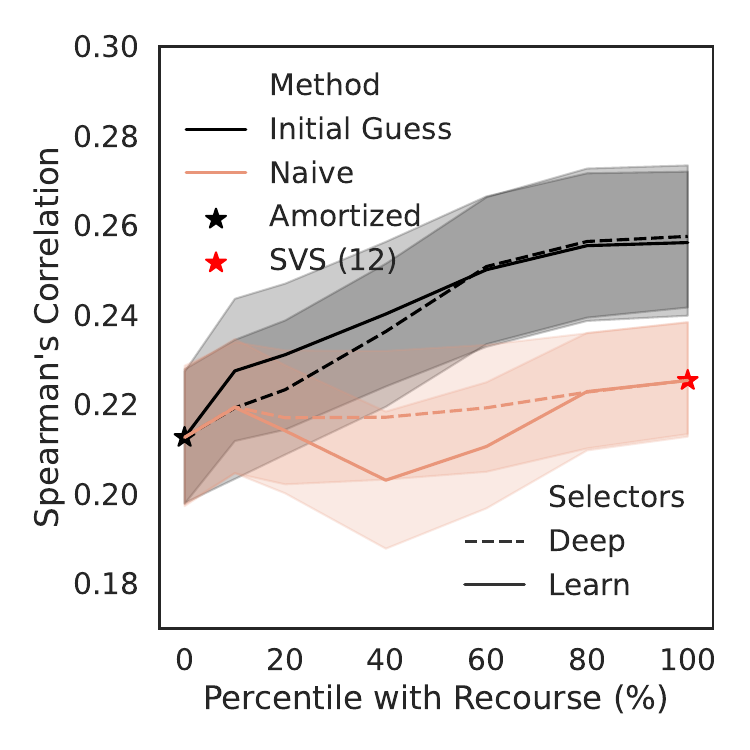}
    \caption{Toxigen}
  \end{subfigure}
  \caption{Fraction of the population that receive explanations with initial guess (x-axis) vs. their Spearman's correlation from the high-quality explanations (y-axis). Naive uses $\lambda_h = 0$ while initial guess uses explanations with initial guess, i.e., when $\lambda_h$ is given in \eqref{eq:lambda_aprox_definition}.}
  \label{fig:initial_guess_sp}
\end{figure}

\subsection{Performance for Different Monte-Carlo Explainers}
\label{apx:ablation_different_recourse}
Figure \ref{fig:various_methods} shows how the MSE and Spearman's correlation behave accordingly with the quality of the Monte Carlo explainer. 
We compare Kernel Shap and Shapley Value Sampling in all experiments.
We observe that when the quality of the Monte Carlo explainer increases, the quality of the Selective explanation also increases, i.e., the MSE decreases and the Spearman's correlation increases.
Moreover, we also observe diminishing returns, i.e., after a certain point, increasing the quality of the Monte Carlo explanations doesn't lead to a tailored increase in performance.
For example, observe the SVS method in the tabular datasets Figure \ref{fig:various_methods} (a) and (b).
We also observe that providing explanations with initial guess has a high impact on both Spearman's correlation and MSE when only providing recourse toa small fraction of the population.
For example, when providing explanations with initial guess for $20\%$ of the population using SVS-12 in the Yelp Review dataset, Figure \ref{fig:various_methods} (c), increases the Spearman's correlation in more than $50\%$ (from 0.2 to more than 0.3). 

\begin{figure}[t]
  \centering
  \begin{subfigure}[b]{0.24\linewidth}
    \includegraphics[width=\linewidth]{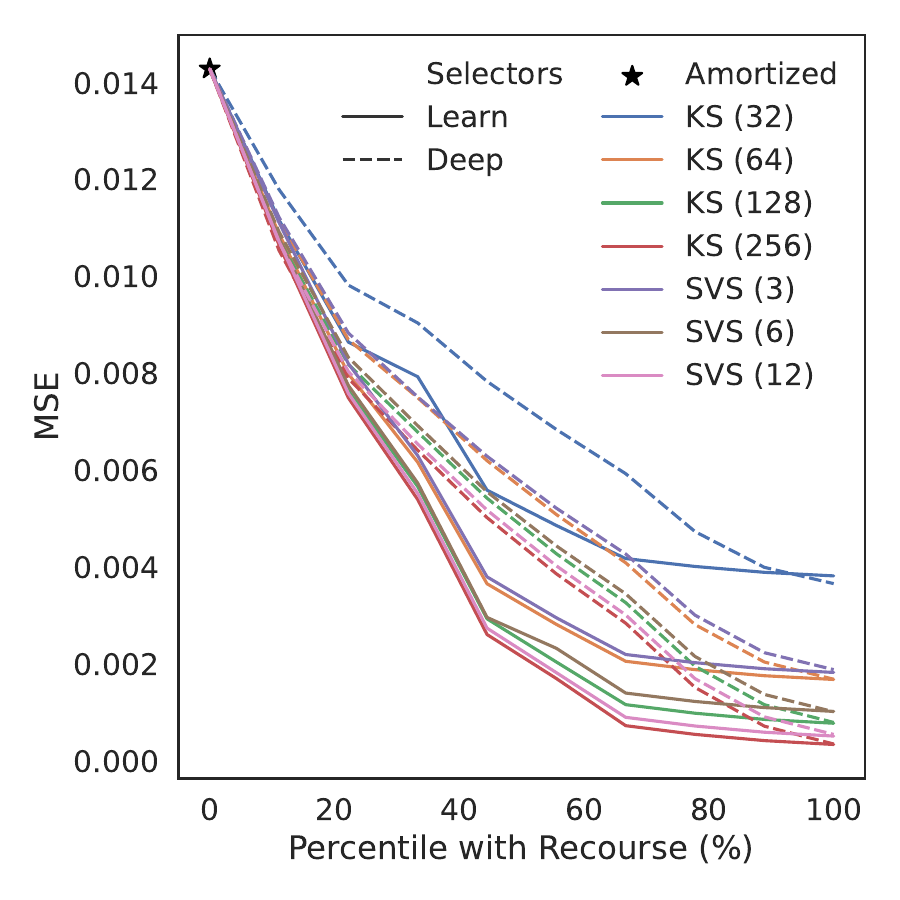}
  \end{subfigure}
  \begin{subfigure}[b]{0.24\linewidth}
    \includegraphics[width=\linewidth]{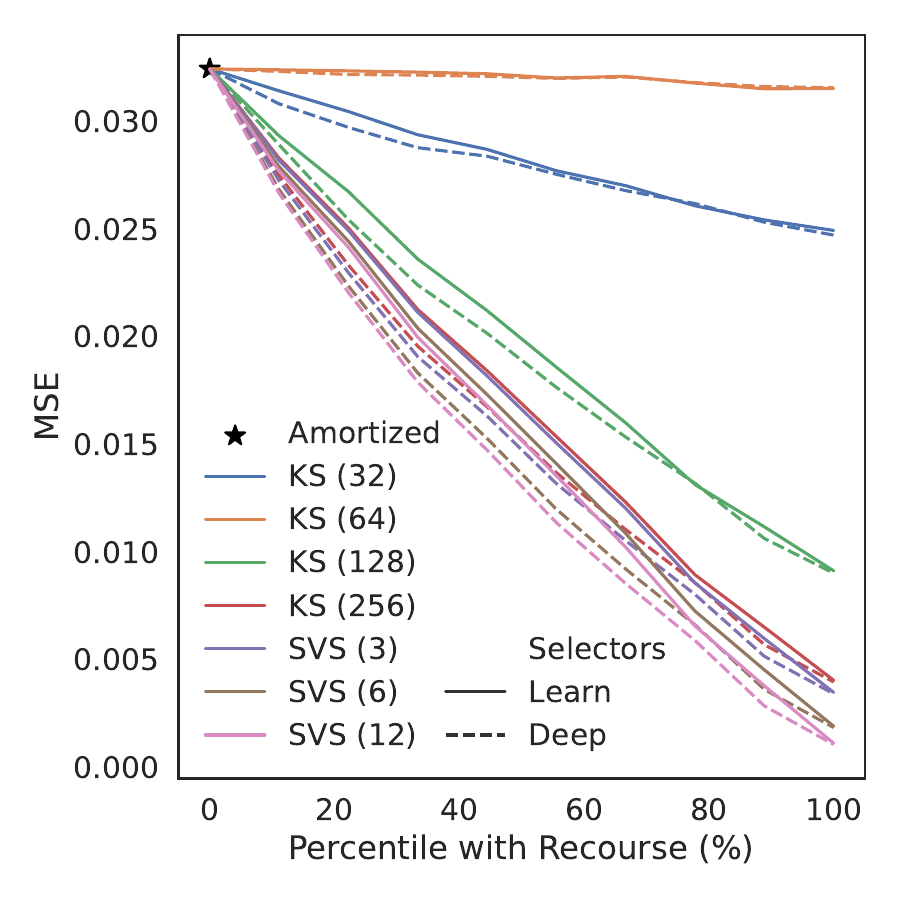}
  \end{subfigure}
  \begin{subfigure}[b]{0.24\linewidth}
    \includegraphics[width=\linewidth]{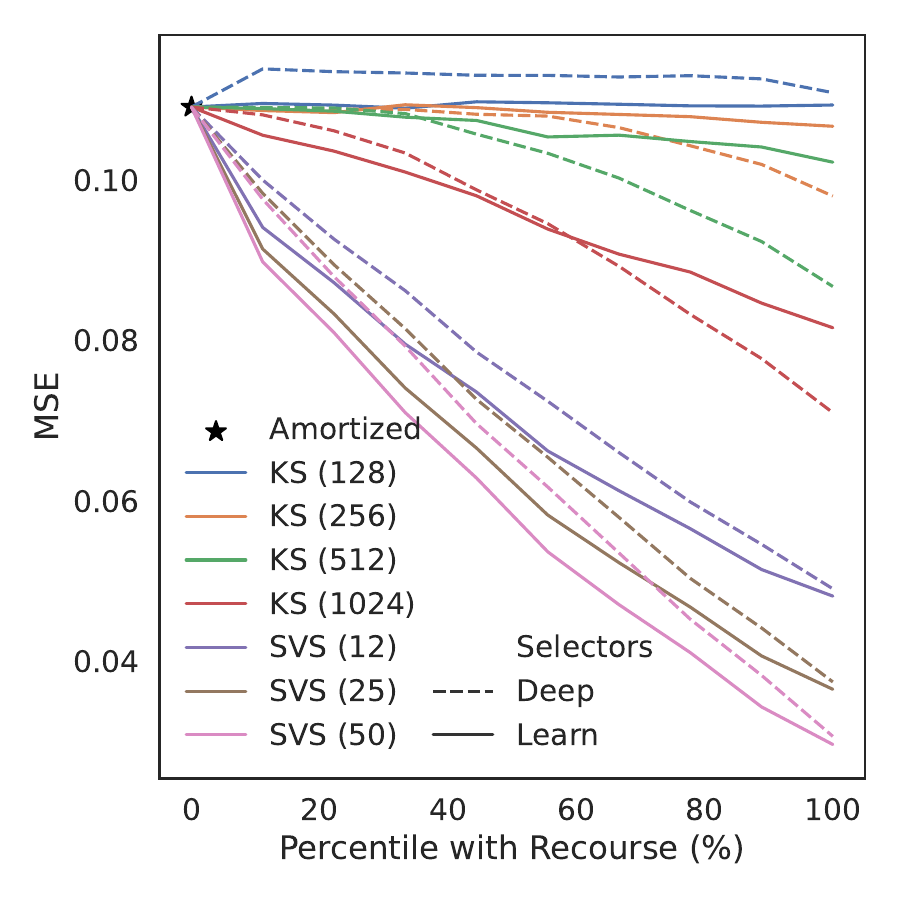}
  \end{subfigure}
  \begin{subfigure}[b]{0.24\linewidth}
    \includegraphics[width=\linewidth]{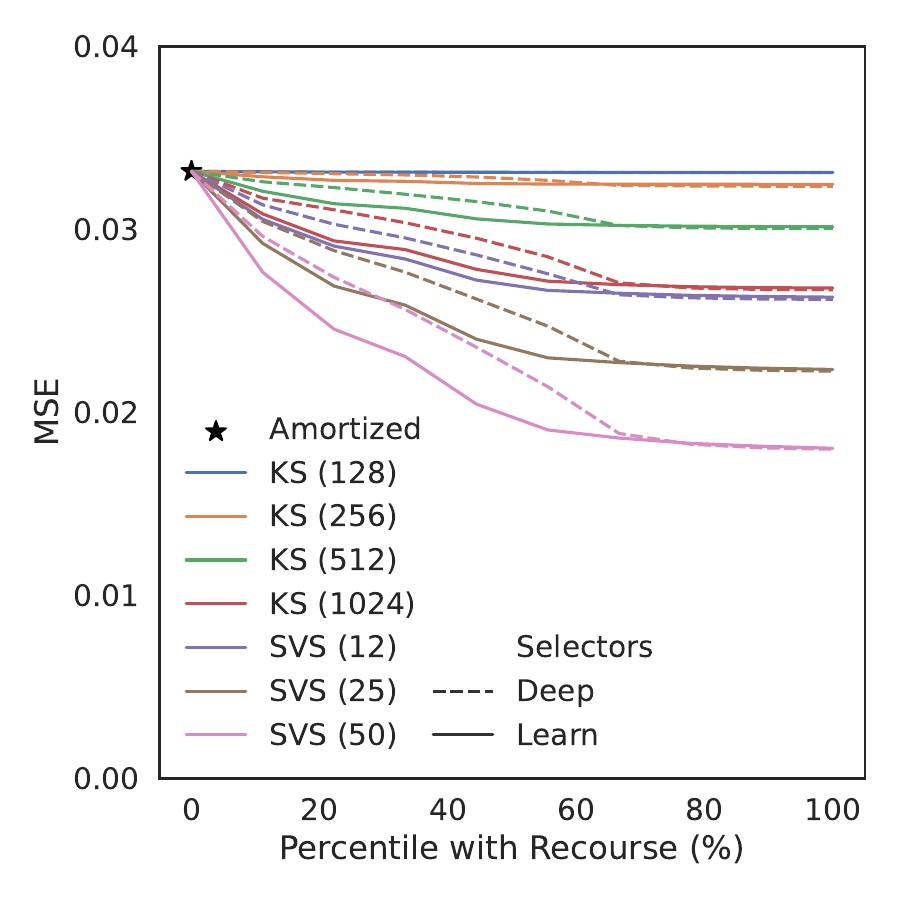}
  \end{subfigure}
  \medskip

    \begin{subfigure}[b]{0.24\linewidth}
    \includegraphics[width=\linewidth]{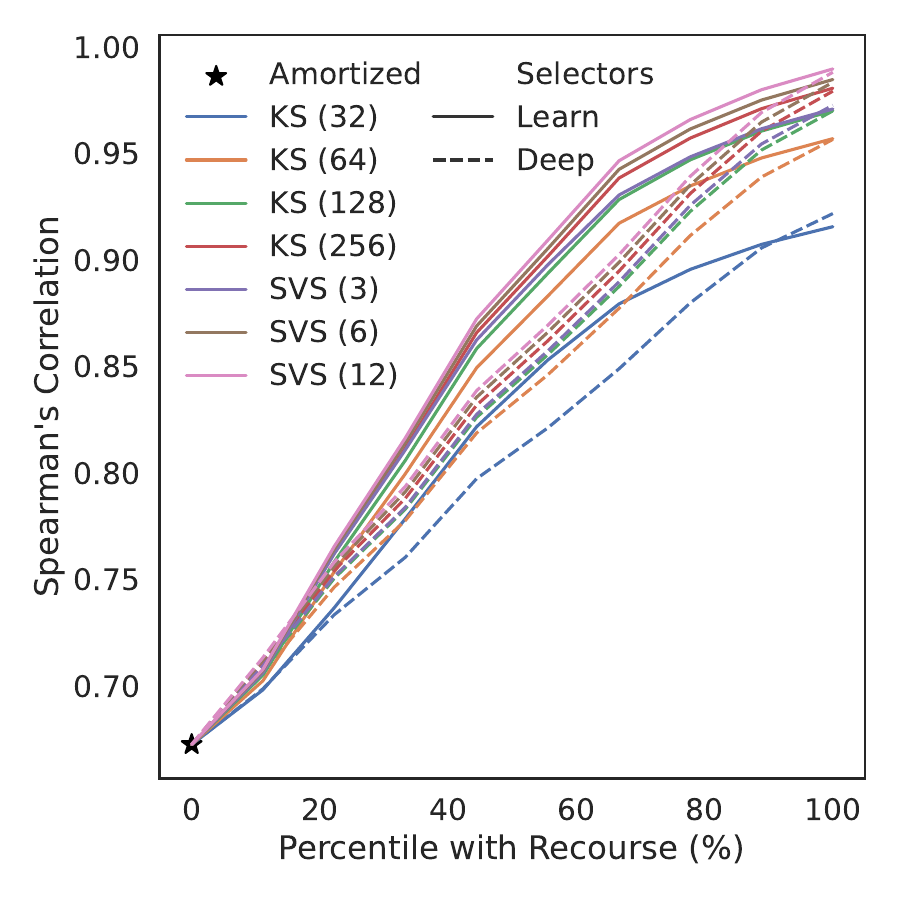}
    \caption{UCI-Adult}
  \end{subfigure}
  \begin{subfigure}[b]{0.24\linewidth}
    \includegraphics[width=\linewidth]{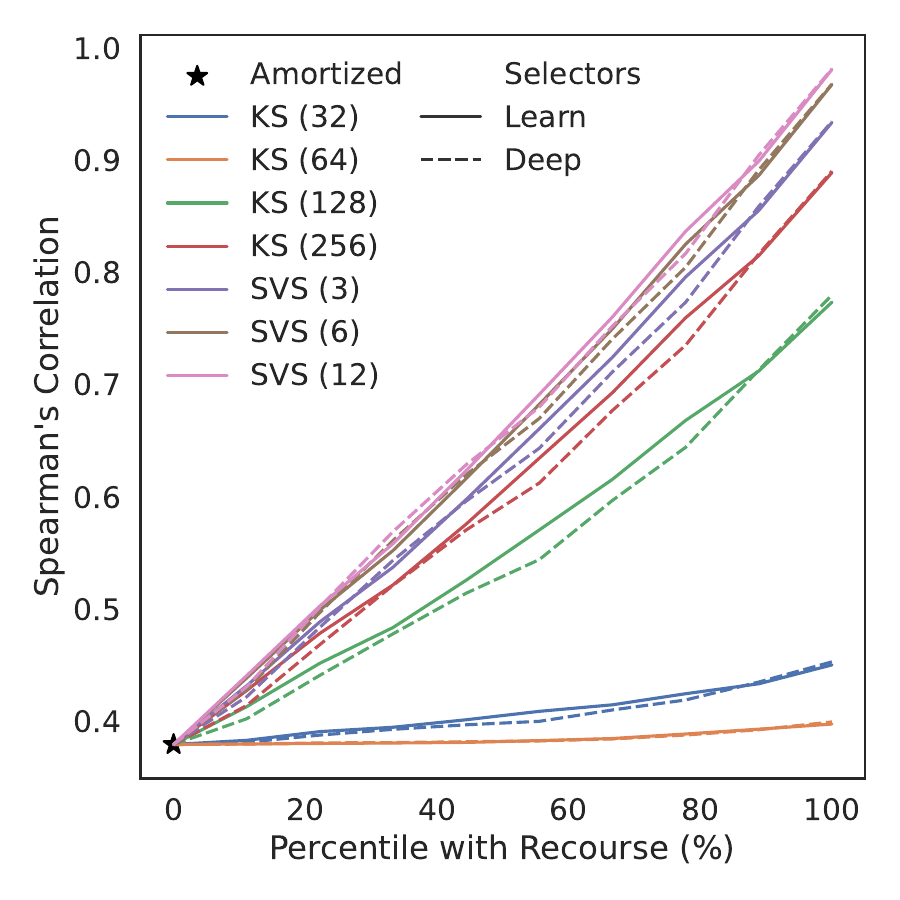}
    \caption{UCI-News}
  \end{subfigure}
  \begin{subfigure}[b]{0.24\linewidth}
    \includegraphics[width=\linewidth]{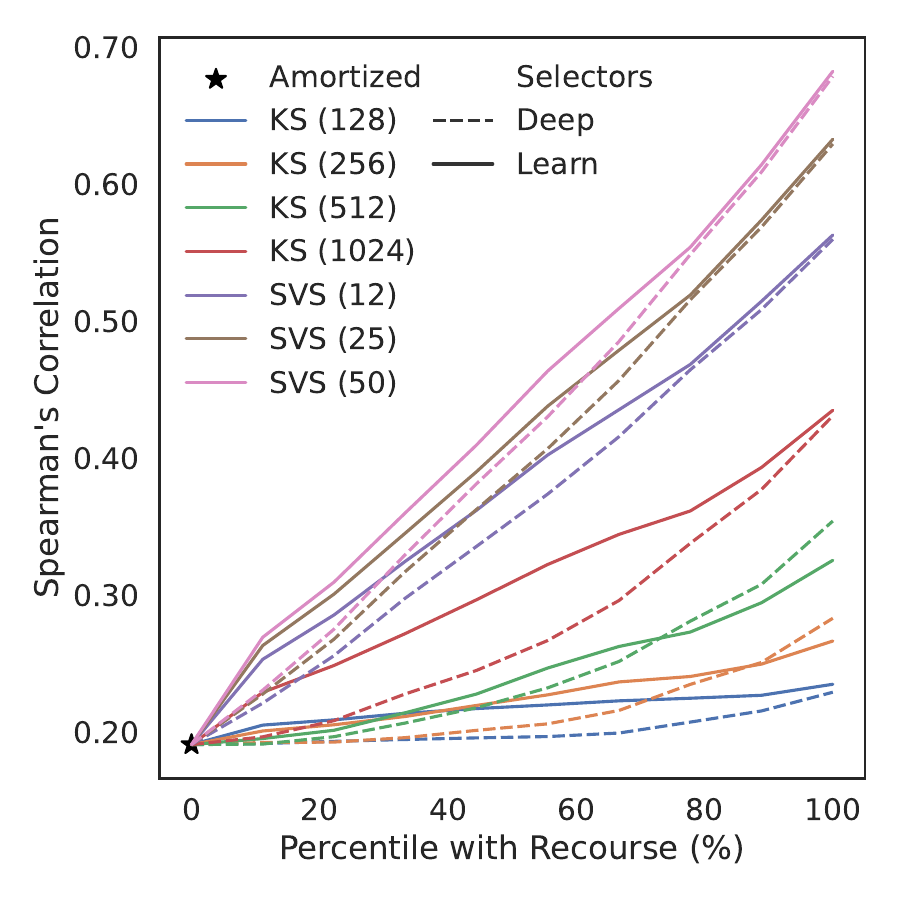}
    \caption{Yelp Review}
  \end{subfigure}
  \begin{subfigure}[b]{0.24\linewidth}
    \includegraphics[width=\linewidth]{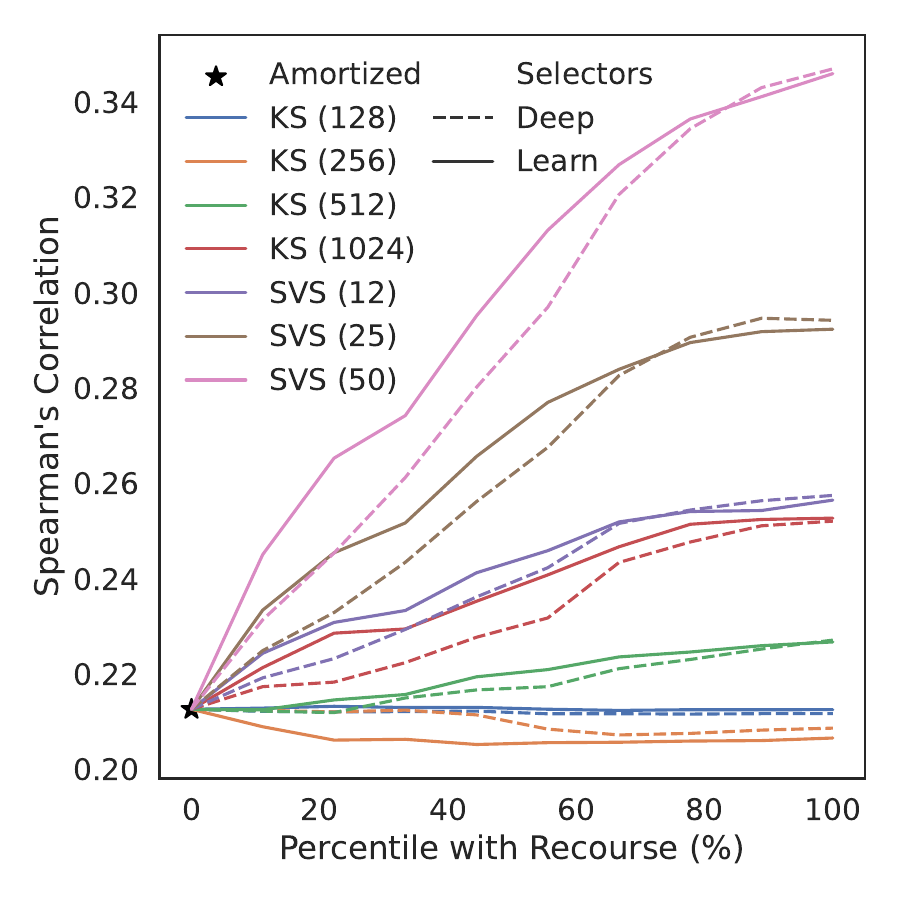}
    \caption{Toxigen}
  \end{subfigure}
  \caption{MSE (top) and Spearman's correlation (bottom) for selective explanations using different Monte Carlo  explainers.}
  \label{fig:various_methods}
\end{figure}

\subsection{Time Sharing Using Selective Explanations}
\label{apx:time_sharing}

\begin{figure}[H]
  \centering
  \begin{subfigure}[b]{0.4\linewidth}
    \includegraphics[width=\linewidth]{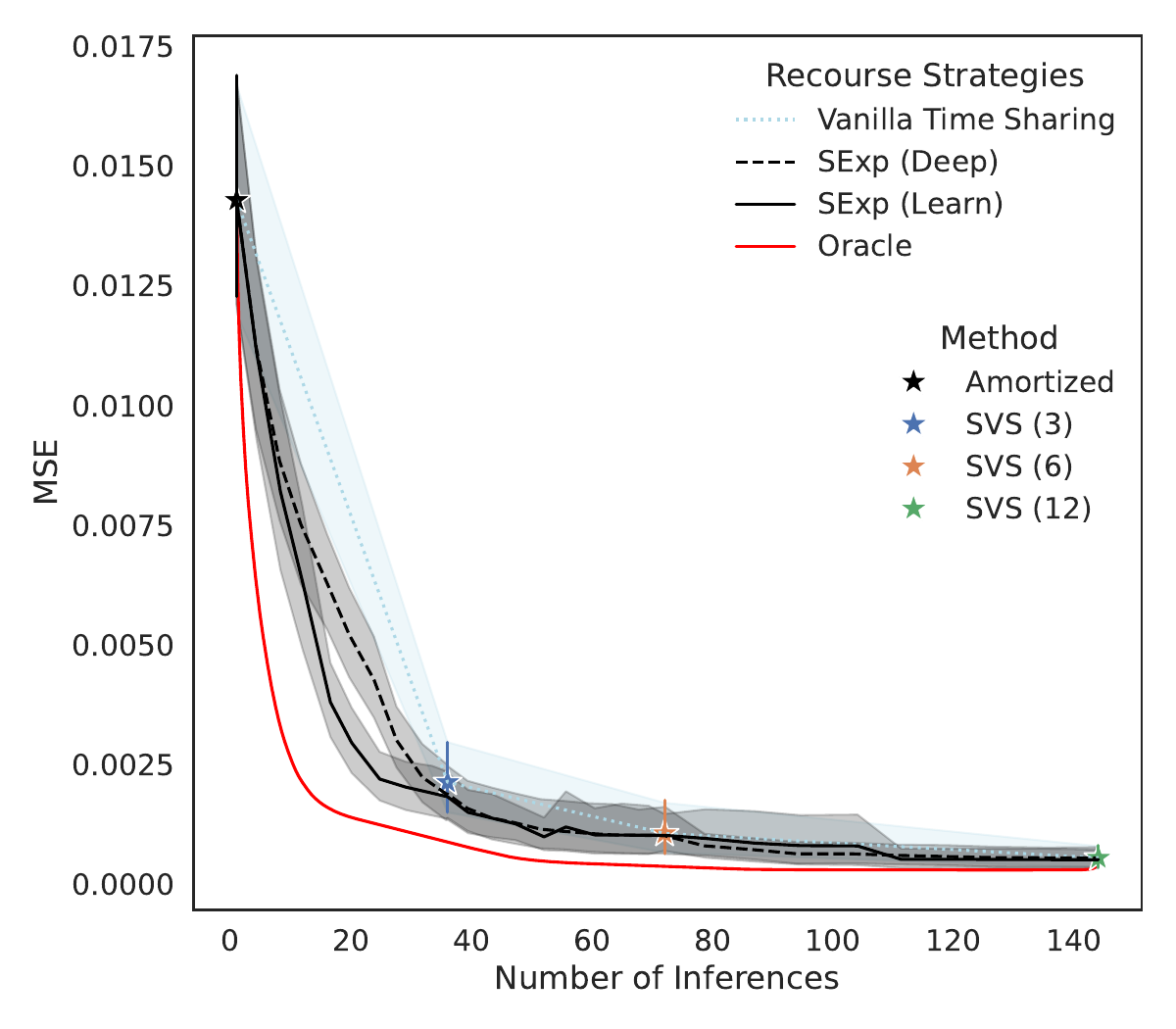}
     \caption{UCI-Adult}
  \end{subfigure}
  \begin{subfigure}[b]{0.4\linewidth}
    \includegraphics[width=\linewidth]{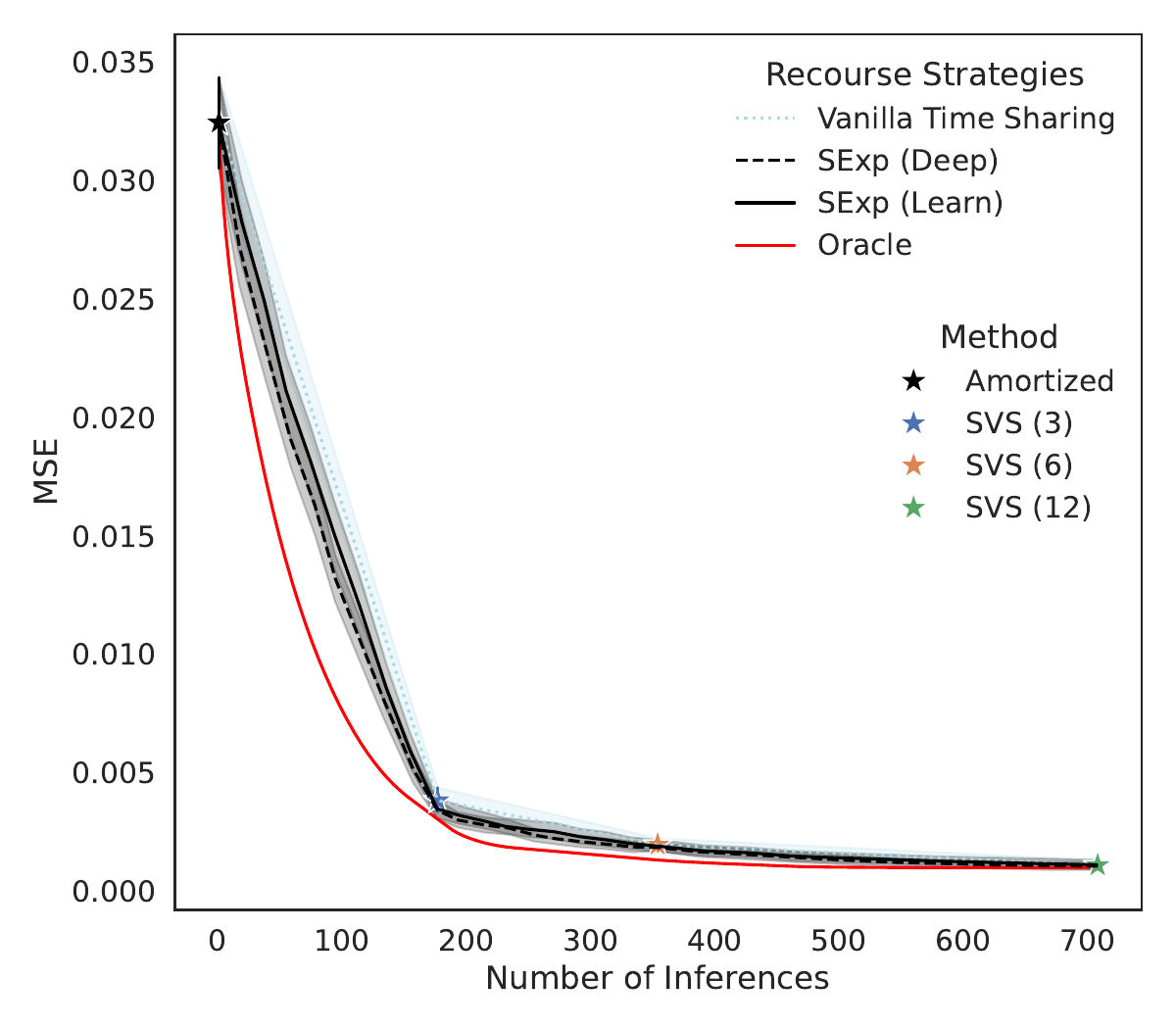}
     \caption{UCI-News}
  \end{subfigure}
  \begin{subfigure}[b]{0.4\linewidth}
    \includegraphics[width=\linewidth]{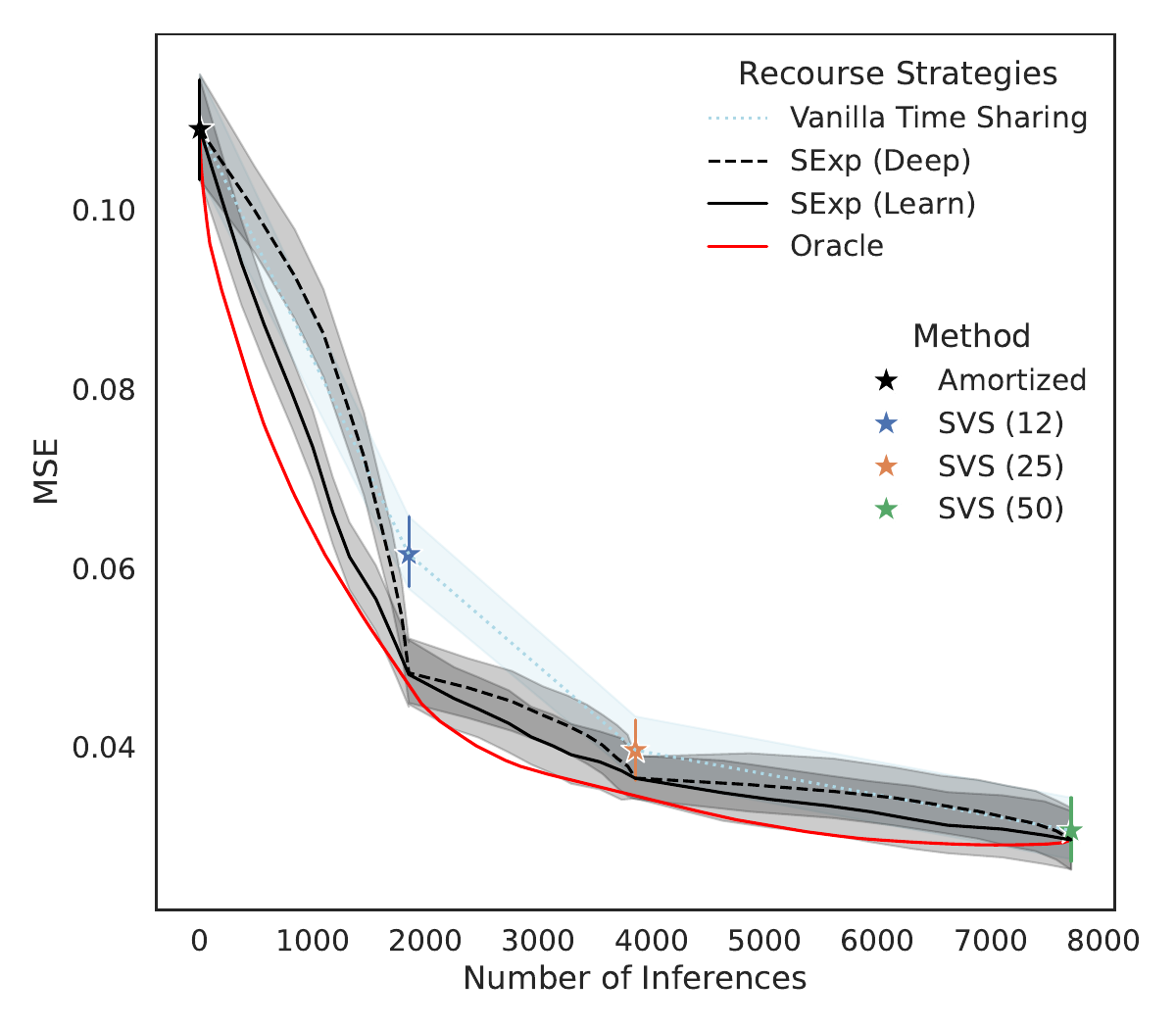}
     \caption{Yelp Review}
  \end{subfigure}
  \begin{subfigure}[b]{0.4\linewidth}
    \includegraphics[width=\linewidth]{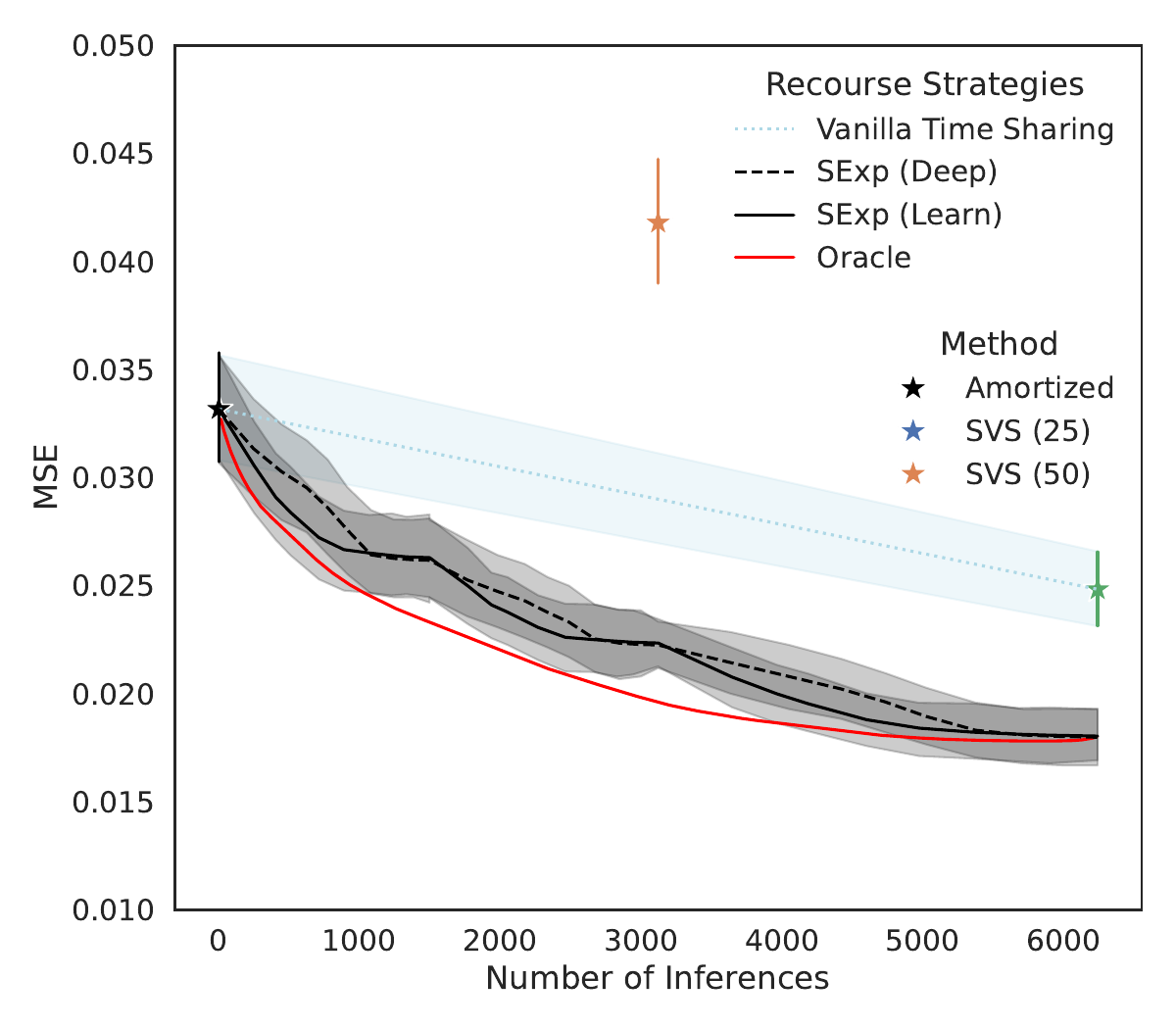 }
     \caption{Toxigen}
  \end{subfigure}
  \caption{Number of model inferences (x-axis) vs. MSE (y-axis) using (i) vanilla time sharing, (ii) time sharing using selective explanations compared to (iii) the oracle when the MSE of the provided explanation is known. }
\label{fig:time_sharing}
\end{figure}

We analyze how selective explanations can be used to improve the quality of Monte Carlo methods by time sharing between methods.
When computing explanations using Monte Carlo methods, we perform $n$ model inferences (x-axis in Figure \ref{fig:time_sharing}) until a desired MSE (y-axis in Figure \ref{fig:time_sharing}) is achieved.
This is done by gradually increasing the number of inferences per points we generate explanations -- this is displayed by the blue dotted curve in Figure \ref{fig:time_sharing} and we name it vanilla time sharing because the inferences (time) are shared gradually across points.
We also compare it with the Oracle given the red curve in Figure \ref{fig:time_sharing} where, for each point, we compute Monte Carlo explanations using SVS with parameter $12$, $25$, and $50$, compute their MSE to high-quality explanations and give the best explanation possible for a given number of inferences.
Oracle is the best that can be done in terms of MSE vs. Number of Inferences only using Monte Carlo explanations.
We compare both Orcle and vanilla time sharing with time sharing using selective explanations given by the black lines in  Figure \ref{fig:time_sharing}.
For the time sharing using selective explanations, we also gradually increase the number of inferences but use selective explanations instead of plain Monte Carlo explanations.

Figure \ref{fig:time_sharing} shows that selective explanations closely approximate the Oracle curve, indicating the selective explanations have close to optimal trade-off between the number of model inferences and MSE.
We highlight the performance of selective explanations in the Toxigen dataset. With only 1000 model inferences, we get better performance than using SVS-50 with about 6000 model inferences.
We also note that in both LLMs, using selective explanations closely approximates the oracle and provides a better explanation with the same number of inferences than just using SVS.

\section{Proofs of Theoretical Results}
\label{apx:proofs}

\begin{theorem}[Optimal $\lambda_h$]
Let $0=\alpha_1 < \alpha_2 < ... < \alpha_m = 1$ and define $Q_i$ as in \eqref{eq:quantile_definition}.
Then, $\lambda_i$ that solves the optimization problem in \eqref{eq:lambda_problem} is given by
\begin{equation}
    \lambda_i = \frac{\sum_{\substack{ (\bx, \by) \in \calD_{\texttt{val}} \\ \uncert(\bx) \in Q_i}} \langle \MC^{n}(\bx, \by) - \MC^{n'}(\bx, \by), \MC^{n}(\bx, \by) - \amortized(\bx, \by) \rangle}{\sum_{\substack{ (\bx, \by) \in \calD_{\texttt{val}} \\ \uncert(\bx) \in Q_i}} \left|\left| \amortized(\bx, \by) - \MC^n(\bx, \by)\right|\right|_2^2}.
\end{equation}
\end{theorem}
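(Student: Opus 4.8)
The optimization in \eqref{eq:lambda_problem} is an unconstrained least-squares problem over the single scalar $\lambda$, so the plan is to expand the squared objective into a quadratic in $\lambda$, invoke convexity, and solve the first-order condition in closed form. The only subtlety will be sign bookkeeping, which is precisely what produces the factor $\MC^n - \amortized$ (rather than $\amortized - \MC^n$) in the stated numerator.

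\textbf{Step 1 (reparametrize the objective).} For each sample $(\bx, \by)$ in the bin, substitute the definition $\ig(\bx,\by) = \lambda\,\amortized(\bx,\by) + (1-\lambda)\MC^n(\bx,\by)$ and observe that
\begin{equation*}
\ig(\bx,\by) - \MC^{n'}(\bx,\by) = \big(\MC^n(\bx,\by) - \MC^{n'}(\bx,\by)\big) + \lambda\big(\amortized(\bx,\by) - \MC^n(\bx,\by)\big).
\end{equation*}
Writing $u = \MC^n(\bx,\by) - \MC^{n'}(\bx,\by)$ and $v = \amortized(\bx,\by) - \MC^n(\bx,\by)$, the per-sample term is $\normEuc{u + \lambda v}^2 = \normEuc{u}^2 + 2\lambda\langle u, v\rangle + \lambda^2\normEuc{v}^2$. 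Summing over all $(\bx,\by)\in\calD_{\texttt{cal}}$ with $\uncert(\bx)\in Q_i$ shows the objective $J(\lambda)$ is a quadratic in $\lambda$ whose leading coefficient is $\sum \normEuc{v}^2 = \sum\normEuc{\amortized - \MC^n}^2 \ge 0$.

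\textbf{Step 2 (minimize).} Since the leading coefficient is nonnegative, $J$ is convex, so the first-order condition $J'(\lambda)=0$ is both necessary and sufficient for a global minimizer. Differentiating gives $J'(\lambda) = 2\sum\langle u,v\rangle + 2\lambda\sum\normEuc{v}^2$, and setting this to zero yields $\lambda_i = -\big(\sum\langle u,v\rangle\big)\big/\big(\sum\normEuc{v}^2\big)$. Finally, using $-v = \MC^n(\bx,\by) - \amortized(\bx,\by)$ to absorb the minus sign into the inner product gives exactly
\begin{equation*}
\lambda_i = \frac{\sum \langle \MC^{n}(\bx,\by) - \MC^{n'}(\bx,\by),\, \MC^{n}(\bx,\by) - \amortized(\bx,\by)\rangle}{\sum \normEuc{\amortized(\bx,\by) - \MC^n(\bx,\by)}^2},
\end{equation*}
which is \eqref{eq:lambda_aprox_definition}.

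\textbf{Main obstacle.} There is no genuine analytic difficulty here; the computation is routine once the reparametrization in Step 1 is made. The one point requiring care is the degenerate case $\sum\normEuc{\amortized - \MC^n}^2 = 0$ (the denominator vanishes), which happens exactly when $\amortized = \MC^n$ on every point of the bin. In that case $J$ does not depend on $\lambda$ and every $\lambda$ is optimal, so the closed form is understood under the tacit assumption — formalized as condition (i) of Theorem~\ref{thm:lambda_are_close}, that $\MC^n$ is sufficiently different from $\amortized$ — that the denominator is strictly positive, guaranteeing a unique minimizer.
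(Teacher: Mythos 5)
Your proof is correct and follows essentially the same route as the paper's: expand the objective as a convex quadratic in $\lambda$, set the derivative to zero, and solve, with your $u,v$ reparametrization being only a notational tidying of the same computation. Your added remark about the degenerate case where $\sum \|\amortized - \MC^n\|_2^2 = 0$ is a sensible caveat the paper leaves implicit, but it does not change the argument.
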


\begin{proof}
First, recall that 
\begin{align}
    \lambda_i &\triangleq \argmin_{\lambda \in \mathbb{R}} \sum_{\substack{ (\bx, \by) \in \calD_{\texttt{val}} \\ \uncert(\bx) \in Q_i}} \left|\left|  \selective(\bx, \by) - \MC^{n'}(\bx, \by) \right|\right|_2^2 \\
    &= \argmin_{\lambda \in \mathbb{R}} \sum_{\substack{ (\bx, \by) \in \calD_{\texttt{val}} \\ \uncert(\bx) \in Q_i}} \left|\left|  \lambda \amortized(\bx, \by) + (1 - \lambda)\MC^{n}(\bx, \by) - \MC^{n'}(\bx, \by) \right|\right|_2^2.
    \label{aeq:convexity}
\end{align}

Note that the function in \eqref{aeq:convexity} is convex in $\lambda$; therefore, if the derivative of it with respect to $\lambda$ is zero, then the lambda that achieves the zero gradient is the minima.
So, let's derivate \eqref{aeq:convexity} to find $\lambda_i$.

\begin{align}
    0 & = \frac{d}{d\lambda} \sum_{\substack{ (\bx, \by) \in \calD_{\texttt{val}} \\ \uncert(\bx) \in Q_i}} \left|\left|  \lambda \amortized(\bx, \by) + (1 - \lambda)\MC^{n}(\bx, \by) - \MC^{n'}(\bx, \by) \right|\right|_2^2 \\
    & = 2\sum_{\substack{ (\bx, \by) \in \calD_{\texttt{val}} \\ \uncert(\bx) \in Q_i}} \lambda || \MC^{n}(\bx, \by) - \amortized(\bx, \by) ||^2 \\ 
    & - 2\sum_{\substack{ (\bx, \by) \in \calD_{\texttt{val}} \\ \uncert(\bx) \in Q_i}} \langle \MC^{n}(\bx, \by) - \MC^{n'}(\bx, \by),  \MC^{n}(\bx, \by) - \amortized(\bx, \by)\rangle
\label{aeq:derivative}
\end{align}
From \eqref{aeq:derivative} we conclude the proof by showing that 
\begin{align}
     \lambda_i = \lambda & = \frac{\sum_{\substack{ (\bx, \by) \in \calD_{\texttt{val}} \\ \uncert(\bx) \in Q_i}} \langle \MC^{n}(\bx, \by) - \MC^{n'}(\bx, \by),  \MC^{n}(\bx, \by) - \amortized(\bx, \by)\rangle}{\sum_{\substack{ (\bx, \by) \in \calD_{\texttt{val}} \\ \uncert(\bx) \in Q_i}} || \MC^{n}(\bx, \by) - \amortized(\bx, \by) ||^2 }.
\end{align}
\end{proof}

\begin{theorem}[$\lambda_i \approx \lambda^{\texttt{opt}}_i$] 
Let the Monte Carlo explanation used to provide recourse $\MC^{n}$ to be different enough from the amortized explainer, i.e., $\EE{|| \MC^{n}(X, Y) - \amortized(X, Y) ||^2} = \mu > 0$.  Also, assume that $\MC^{n'}$ is a good Monte Carlo approximation for the high-quality explainer $\hq$, i.e.,  $\EE{|| \MC^{n'}(X, Y) - \hq(X, Y) ||^2} = \mu^*$ for $\epsilon > \frac{\sqrt{5\mu^*}}{\mu}$.
Recall that $\bx \in \mathbb{R}^d$.
If the explanations are bounded, i.e., $||\MC^{n}(\bx, \by)||, ||\amortized(\bx, \by)||, ||\hq(\bx, \by)| < Cd$ for some $C > 0$ then
\begin{equation}
    \Pr[|\lambda_i -\lambda^{\texttt{opt}}_i| > \epsilon] \leq e^{\frac{-\mu^2|Q_i|}{4Cd}} + e^{\frac{-\mu^4\epsilon^4|Q_i|}{400Cd}},
\end{equation}
where $|Q_i|$ is the number of points $\bx$ in the validation dataset $\calD_{\texttt{val}}$ that are in the bin $Q_i$.
\end{theorem}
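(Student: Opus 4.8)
The plan is to leverage the closed form already established in Theorem~\ref{thm:optimal_lambda}. Running that theorem's derivation verbatim, but with the regression target $\MC^{n'}$ replaced by the high-quality explainer $\hq$, shows that the oracle coefficient $\lambda^{\texttt{opt}}_i$ obeys the \emph{same} closed form with $\hq$ in place of $\MC^{n'}$ in the numerator and the \emph{identical} denominator $\sum \|\amortized - \MC^n\|^2$ (provided the unconstrained minimizer lands in $[0,1]$, which the boundedness hypotheses make the generic case). Subtracting the two expressions, the denominators cancel and the numerators combine through $(\MC^n - \MC^{n'}) - (\MC^n - \hq) = \hq - \MC^{n'}$, giving
\begin{equation*}
\lambda_i - \lambda^{\texttt{opt}}_i = \frac{\sum_{\uncert(\bx)\in Q_i} \langle \hq(\bx,\by) - \MC^{n'}(\bx,\by),\, \MC^n(\bx,\by) - \amortized(\bx,\by)\rangle}{\sum_{\uncert(\bx)\in Q_i} \|\amortized(\bx,\by) - \MC^n(\bx,\by)\|^2} =: \frac{N}{D},
\end{equation*}
where the sums run over $(\bx,\by)\in\calD_{\texttt{val}}$ with $\uncert(\bx)\in Q_i$. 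This identity is the crux: it isolates the entire error as a ratio whose numerator $N$ is driven purely by the $\MC^{n'}$-versus-$\hq$ gap and whose denominator $D$ is exactly the quantity assumed bounded away from zero in expectation ($\EE{D}=\mu$).

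Next I would reduce the target tail to two scalar concentration events via the inclusion $\{|N/D| > \epsilon\} \subseteq \{D < \mu/2\} \cup \{|N| > \epsilon\mu/2\}$, since on the complement of both events $|N/D| \le (\epsilon\mu/2)/(\mu/2) = \epsilon$. The first event is immediate: $D$ is an empirical mean of i.i.d.\ nonnegative summands $\|\amortized - \MC^n\|^2$, bounded because all explanations have norm at most $Cd$, with mean $\mu$. Hoeffding's inequality applied to the lower deviation $D - \mu \le -\mu/2$ then yields the first term $e^{-\mu^2|Q_i|/(4Cd)}$, with the $Cd$ factor coming from the range of the summands.

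For the numerator I would \emph{not} center and apply a triangle inequality (that route produces only an $\epsilon^2$ exponent and an awkward bias term). Instead, apply Cauchy--Schwarz to the sum, $|N| \le \sqrt{D^\ast}\,\sqrt{D}$ with $D^\ast := \tfrac{1}{|Q_i|}\sum \|\hq - \MC^{n'}\|^2$ the empirical $\MC^{n'}$-to-$\hq$ error. Bounding $D$ above by its almost-sure ceiling (a constant multiple of $Cd$ from the norm bounds), the event $\{|N| > \epsilon\mu/2\}$ forces $D^\ast > c\,\epsilon^2\mu^2/(Cd)$ for an absolute constant $c$; here the hypothesis $\epsilon > \sqrt{5\mu^\ast}/\mu$ enters precisely to guarantee that this threshold sits strictly above the mean $\mu^\ast = \EE{D^\ast}$, so that the event is a genuine upper tail. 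A second Hoeffding bound on the empirical mean $D^\ast$ of the bounded summands $\|\hq - \MC^{n'}\|^2$ then controls this tail. The \emph{quartic} dependence $\epsilon^4\mu^4$ in the resulting exponent is exactly the composition of two squarings: Cauchy--Schwarz converts the linear threshold $\epsilon\mu$ into a quadratic threshold $\epsilon^2\mu^2$ on $D^\ast$, and Hoeffding squares it once more. A union bound over the two events produces $e^{-\mu^2|Q_i|/(4Cd)} + e^{-\mu^4\epsilon^4|Q_i|/(400Cd)}$.

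The main obstacle is this numerator step, and the key decision is to keep $N$ as an \emph{empirical} inner-product sum dominated by $\sqrt{D^\ast D}$ rather than bounding it entrywise or through $\EE{N}$. Only after this reduction do the two structural hypotheses enter cleanly and separately: $\mu$ bounded away from $0$ keeps $D$ large (and caps the $D^\ast$ threshold from below), while $\mu^\ast$ small relative to $\epsilon^2\mu^2$ keeps $D^\ast$ small. Matching the precise absolute constants $4$ and $400$ and the exact power of $Cd$ in the two exponents is then routine bookkeeping over the Hoeffding ranges implied by the $Cd$ boundedness assumption, and I would not grind through it in the sketch.
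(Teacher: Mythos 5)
Your proposal is correct and follows essentially the same route as the paper's proof: the same closed-form identity for $\lambda_i - \lambda^{\texttt{opt}}_i$ as a ratio obtained from Theorem~\ref{thm:optimal_lambda}, the same Cauchy--Schwarz bound on the inner-product numerator, the same conditioning on the denominator event $\{\Delta \le \mu/2\}$, and the same pair of Hoeffding bounds combined by a union bound, with the hypothesis $\epsilon > \sqrt{5\mu^*}/\mu$ entering exactly where you say it does. Your remark about the unconstrained minimizer landing in $[0,1]$ is a detail the paper glosses over, but it does not change the argument.
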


\begin{proof}
Denote $  |Q_i| = | \{(\bx, \by) \in \calD_{\texttt{val}}, \text{ s.t. } \uncert(\bx) \in Q_i\}|$.

We start by showing that if $\EE{|| \MC^{n}(X, Y) - \amortized(X, Y) ||^2} = \mu$ then 
\begin{align}
    \ \ & \Pr\left[\frac{1}{|Q_i|} \sum_{\substack{ (\bx, \by) \in \calD_{\texttt{val}} \\ \uncert(\bx) \in Q_i}} || \MC^{n}(\bx, \by) - \amortized(\bx, \by) ||^2 \leq \frac{\mu}{2}\right] \\
    = & \Pr\left[\mu - \frac{1}{|Q_i|} \sum_{\substack{ (\bx, \by) \in \calD_{\texttt{val}} \\ \uncert(\bx) \in Q_i}} || \MC^{n}(\bx, \by) - \amortized(\bx, \by) ||^2  \geq \frac{\mu}{2}\right] \\
    \leq & e^{\frac{-\mu^2|Q_i|}{4Cd}}.
    \label{aeq:concentration_small_1}
\end{align}
Where the inequality in \eqref{aeq:concentration_small_1} follows from Hoeffding's inequality and the fact that:
\begin{equation}
    || \MC^{n}(\bx, \by) - \amortized(\bx, \by) ||^2 \leq ||\MC^{n}(\bx, \by)  || + ||  \amortized(\bx, \by)|| \leq 2Cd.
\end{equation}

Second, we recall that $\EE{|| \MC^{n'}(X, Y) - \hq(X, Y) ||^2} = \mu^* \leq \frac{\mu^2 \epsilon^2}{5}$. Then, we have that
\begin{align}
    \ \ & \Pr\left[\frac{1}{|Q_i|} \sum_{\substack{ (\bx, \by) \in \calD_{\texttt{val}} \\ \uncert(\bx) \in Q_i}} || \hq(\bx, \by) - \amortized(\bx, \by) ||^2 \geq \epsilon^2 \frac{\mu^2}{4} \right] \\
    = & \Pr\left[ \frac{1}{|Q_i|} \sum_{\substack{ (\bx, \by) \in \calD_{\texttt{val}} \\ \uncert(\bx) \in Q_i}} || \hq(\bx, \by) - \amortized(\bx, \by) ||^2 -  \mu^* \geq \epsilon^2 \frac{\mu^2}{4} -  \mu^*\right]\\
    \leq & \Pr\left[ \frac{1}{|Q_i|} \sum_{\substack{ (\bx, \by) \in \calD_{\texttt{val}} \\ \uncert(\bx) \in Q_i}} || \hq(\bx, \by) - \amortized(\bx, \by) ||^2 -  \mu^* \geq \epsilon^2 \frac{\mu^2}{20}\right] \\
    \leq & e^{\frac{-\mu^4\epsilon^4|Q_i|}{400Cd}}.
    \label{aeq:concentration_small_2}
\end{align}
Where the inequality in \eqref{aeq:concentration_small_2} follows from Hoeffding's inequality and the fact that:
\begin{equation}
    || \hq(\bx, \by) - \amortized(\bx, \by) ||^2 \leq ||\hq(\bx, \by)  || + ||  \amortized(\bx, \by)|| \leq 2Cd.
\end{equation}

Third, notice by directly applying Theorem \ref{thm:optimal_lambda} and replacing the Monte Carlo explanation by the high-quality explanation, we have that
\begin{equation}
    \lambda^{\texttt{opt}}_i = \frac{\sum_{\substack{ (\bx, \by) \in \calD_{\texttt{val}} \\ \uncert(\bx) \in Q_i}} \langle \MC^{n}(\bx, \by) - \hq(\bx, \by),  \MC^{n}(\bx, \by) - \amortized(\bx, \by)\rangle}{\sum_{\substack{ (\bx, \by) \in \calD_{\texttt{val}} \\ \uncert(\bx) \in Q_i}} || \MC^{n}(\bx, \by) - \amortized(\bx, \by) ||^2 }.
\end{equation}

Hence, we can write $\lambda^{\texttt{opt}}_i - \lambda_i$ as
\begin{align}
    \ \ &| \lambda^{\texttt{opt}}_i - \lambda_i |\\
    = & \left| \frac{\sum_{\substack{ (\bx, \by) \in \calD_{\texttt{val}} \\ \uncert(\bx) \in Q_i}} \langle \MC^{n'}(\bx, \by) - \hq(\bx, \by),  \MC^{n}(\bx, \by) - \amortized(\bx, \by)\rangle}{\sum_{\substack{ (\bx, \by) \in \calD_{\texttt{val}} \\ \uncert(\bx) \in Q_i}} || \MC^{n}(\bx, \by) - \amortized(\bx, \by) ||^2 } \right| \\
    \leq &  \frac{\left(\sum_{\substack{ (\bx, \by) \in \calD_{\texttt{val}} \\ \uncert(\bx) \in Q_i}} || \MC^{n'}(\bx, \by) - \hq(\bx, \by)||_2^2 || \MC^{n}(\bx, \by) - \amortized(\bx, \by)||_2^2 \right)^{1/2}}{\sum_{\substack{ (\bx, \by) \in \calD_{\texttt{val}} \\ \uncert(\bx) \in Q_i}} || \MC^{n}(\bx, \by) - \amortized(\bx, \by) ||^2 } , 
    \label{aeq:cauchy}
\end{align}
where the last inequality \eqref{aeq:cauchy} comes from the Cauchy–Schwarz inequality. Denote the denominator in \eqref{aeq:cauchy} by $\Delta$, i.e., 
$$\sum_{\substack{ (\bx, \by) \in \calD_{\texttt{val}} \\ \uncert(\bx) \in Q_i}} || \MC^{n}(\bx, \by) - \amortized(\bx, \by) ||^2 = \Delta.$$

Lastly, notice that $\MC^{n'}(\bx, \by)$ is sampled independently of $\MC^{n}(\bx, \by)$ and that $\hq(\bx, \by)$ is deterministic.
Therefore:
\begin{align}
    \ \ & \Pr[| \lambda^{\texttt{opt}}_i - \lambda_i | \geq \epsilon] \\
    \leq & \Pr\left[\frac{\left(\sum_{\substack{ (\bx, \by) \in \calD_{\texttt{val}} \\ \uncert(\bx) \in Q_i}} || \MC^{n'}(\bx, \by) - \hq(\bx, \by)||_2^2 || \MC^{n}(\bx, \by) - \amortized(\bx, \by)||_2^2 \right)^{1/2}}{\sum_{\substack{ (\bx, \by) \in \calD_{\texttt{val}} \\ \uncert(\bx) \in Q_i}} || \MC^{n}(\bx, \by) - \amortized(\bx, \by) ||^2 } 
  \geq \epsilon   \right] \label{aeq:applying_cauchy_to_prob} \\
 \leq & \Pr\left[\frac{\sum_{\substack{ (\bx, \by) \in \calD_{\texttt{val}} \\ \uncert(\bx) \in Q_i}} || \MC^{n'}(\bx, \by) - \hq(\bx, \by)||_2^2 || \MC^{n}(\bx, \by) - \amortized(\bx, \by)||_2^2 }{\Delta^2} 
 \geq \epsilon^2   \right] \\
  \leq & \Pr\left[ \left. \frac{\sum_{\substack{ (\bx, \by) \in \calD_{\texttt{val}} \\ \uncert(\bx) \in Q_i}} || \MC^{n'}(\bx, \by) - \hq(\bx, \by)||_2^2 || \MC^{n}(\bx, \by) - \amortized(\bx, \by)||_2^2 }{\Delta^2} \geq \epsilon^2  \right|  \Delta \leq \frac{\mu}{2}\right] \nonumber \\
  & \times \Pr\left[ \Delta \leq \frac{\mu}{2}\right] \nonumber\\
  + & \Pr\left[ \left. \frac{\sum_{\substack{ (\bx, \by) \in \calD_{\texttt{val}} \\ \uncert(\bx) \in Q_i}} || \MC^{n'}(\bx, \by) - \hq(\bx, \by)||_2^2 || \MC^{n}(\bx, \by) - \amortized(\bx, \by)||_2^2 }{\Delta^2} \geq \epsilon^2  \right|  \Delta > \frac{\mu}{2}\right] \nonumber  \\
  & \times \Pr\left[ \Delta > \frac{\mu}{2}\right] \label{aeq:conditioning}\\
  \leq & \Pr\left[  {\sum_{\substack{ (\bx, \by) \in \calD_{\texttt{val}} \\ \uncert(\bx) \in Q_i}} || \MC^{n'}(\bx, \by) - \hq(\bx, \by)||_2^2 || \MC^{n}(\bx, \by) - \amortized(\bx, \by)||_2^2 } \geq \epsilon^2 \frac{\mu^2}{4}\right] \nonumber \\
  & + \Pr\left[ \Delta \leq \frac{\mu}{2}\right] \label{aeq:ind_and_bounded}\\
  \leq & e^{\frac{-\mu^2|Q_i|}{4Cd}} + e^{\frac{-\mu^4\epsilon^4|Q_i|}{400Cd}}. \label{aeq:final_equation}
\end{align}
Where the inequality in \eqref{aeq:applying_cauchy_to_prob} is a direct application of \ref{aeq:cauchy}, the inequality in \eqref{aeq:conditioning} comes from simply conditioning, the inequality in \eqref{aeq:ind_and_bounded} comes from the fact that probabilities are bounded by one getting rid of the first term in \eqref{aeq:ind_and_bounded} (first out of lines) and the fourth term in \eqref{aeq:ind_and_bounded} (forth out of lines) and the fact that $\MC^{n'}(\bx, \by)$ is sampled independently of $\MC^{n}(\bx, \by)$ and that $\hq(\bx, \by)$ is deterministic. Finally, the last inequality in \eqref{aeq:final_equation} comes from applying \eqref{aeq:concentration_small_1} and \eqref{aeq:concentration_small_2}.

Hence, from \eqref{aeq:final_equation}, we conclude that
\begin{equation}
    \Pr[| \lambda^{\texttt{opt}}_i - \lambda_i | \geq \epsilon] \leq e^{\frac{-\mu^2|Q_i|}{4Cd}} + e^{\frac{-\mu^4\epsilon^4|Q_i|}{400Cd}}.
\end{equation}

\end{proof}

\begin{proposition}[Coverage for Inference Budget]
Let $\text{N}_{\texttt{budget}} \geq 1$ be the set inference budget, and assume that the Monte Carlo method $\MC^n(\bx, \by)$ uses $n$ model inferences.
Then, the coverage level $\alpha$ should be chosen such that
\begin{equation}
    \argmin_{\alpha \in [0, 1]} \left\{ \EE{\text{N}(\selective(\bx, \by))} \leq \text{N}_{\texttt{budget}} \right\} = \frac{n + 1 - \text{N}_{\texttt{budget}}}{n}.
\end{equation}
Recall that Shapley Value Sampling with parameter $m$ performs $1 + dm$ inferences ($\bx \in \mathbb{R}^d$), and Kernel Shap with parameter $m$ performs $m$ inferences.
\end{proposition}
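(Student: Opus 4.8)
The plan is to write $\EE{\text{N}(\selective(\bx, \by))}$ explicitly as an affine function of the coverage $\alpha$ and then invert the budget inequality. First I would read off the inference count of each branch of the selective explainer from Definition~\ref{def:selective_explanations}: when $\tau_\alpha(\bx) = 1$ the output is $\amortized(\bx, \by)$, costing a single inference, whereas when $\tau_\alpha(\bx) = 0$ the output is the explanation with initial guess $\ig(\bx, \by)$, which evaluates $\amortized$ (one inference) together with $\MC^n$ ($n$ inferences), for a total of $n + 1$. Thus $\text{N}(\selective(\bx, \by))$ is the two-valued random variable taking $1$ on $\{\tau_\alpha = 1\}$ and $n + 1$ on $\{\tau_\alpha = 0\}$.

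Next I would take the expectation using the defining property of the coverage. By the threshold construction in \eqref{eq:selecting_threshold} the selection function obeys $\Pr[\tau_\alpha(\bx) = 1] = \alpha$ (exactly, in the non-atomic case), so
\[
    \EE{\text{N}(\selective(\bx, \by))} = \alpha \cdot 1 + (1 - \alpha)(n + 1) = n + 1 - \alpha n .
\]
This expression is affine and strictly decreasing in $\alpha$ because its slope is $-n < 0$, which makes the budget constraint immediate to invert: $\EE{\text{N}(\selective(\bx, \by))} \leq \text{N}_{\texttt{budget}}$ is equivalent to $n + 1 - \alpha n \leq \text{N}_{\texttt{budget}}$, i.e.\ to $\alpha \geq \frac{n + 1 - \text{N}_{\texttt{budget}}}{n}$.

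Finally I would identify the minimizer. The feasible set is the closed interval $\{\alpha \in [0,1] : \alpha \geq (n + 1 - \text{N}_{\texttt{budget}})/n\}$, whose left endpoint is exactly $(n + 1 - \text{N}_{\texttt{budget}})/n$; since the quantity being minimized is $\alpha$ itself, the $\argmin$ is this endpoint, which is the claimed identity. I would add a one-line well-posedness check that $\text{N}_{\texttt{budget}} \geq 1$ forces the endpoint to be at most $1$ and that the implicit assumption $\text{N}_{\texttt{budget}} \leq n + 1$ keeps it nonnegative, and then substitute $n = 1 + dm$ for SVS-$m$ and $n = m$ for KS-$m$ to recover the stated special cases. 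I do not anticipate a genuine obstacle: the entire argument is a single linear computation, and the only point deserving care is the exact-coverage identity $\Pr[\tau_\alpha(\bx) = 1] = \alpha$, which holds exactly when $\uncert$ has no atoms and otherwise only up to the empirical quantization of \eqref{eq:selecting_threshold} (in which case the inequality direction still works in favor of satisfying the budget).
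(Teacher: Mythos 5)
Your proposal is correct and follows essentially the same route as the paper: both write the expected inference count as the affine function $\alpha\cdot 1 + (1-\alpha)(n+1)$, invert the budget inequality to get $\alpha \geq (n+1-\text{N}_{\texttt{budget}})/n$, and take the left endpoint as the minimizer. Your extra remarks on well-posedness and on the exact-coverage identity $\Pr[\tau_\alpha(\bx)=1]=\alpha$ (versus the $\geq \alpha$ guaranteed by the quantile threshold) are refinements the paper leaves implicit, not a different argument.
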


\begin{proof}
    Let $\alpha \in [0, 1]$, then an $\alpha$ portion of examples receive explanations from the amortized explainer, i.e., they receive one inference, and $1 - \alpha$ portion of examples receive explanations with initial guess, i.e., $n$ model inferences. Therefore, the expected number of model inferences per instance is given by \eqref{aeq:average_inferences}.
    \begin{equation}
        \EE{\text{N}(\selective(\bx, \by))} = \alpha + (1 - \alpha)(n+1)
        \label{aeq:average_inferences}
    \end{equation}
    In order for the inference budget to be followed, it is necessary that
    \begin{equation}
        \EE{\text{N}(\selective(\bx, \by))} = \alpha + (1 - \alpha)(n+1) \leq \text{N}_{\texttt{budget}}.
        \label{aeq:average_inferences_bound}
    \end{equation}
    From $\eqref{aeq:average_inferences_bound}$, we conclude that:
    \begin{equation}
        \alpha \geq \frac{n + 1 - \text{N}_{\texttt{budget}}}{n},
        \label{aeq:average_inferences_algebra}
    \end{equation}
    Hence, 
    \begin{equation}
        \argmin{\alpha \in [0, 1]} \left\{ \EE{\text{N}(\selective(\bx, \by))} \leq \text{N}_{\texttt{budget}} \right\} = \frac{n + 1 - \text{N}_{\texttt{budget}}}{n}.
    \end{equation}
\end{proof}

\end{document}